\newcommand{\algmargin}{\the\ALG@thistlm}
\algnewcommand{\parState}[1]{\State%
  \parbox[t]{\dimexpr\linewidth-\algmargin}{\strut #1\strut}}
\newtheorem{theorem}{\bf Theorem}
\newtheorem{proposition}{\bf Proposition}
\begin{document}
\clearpage
\title{\huge Ultra-Reliable Indoor Millimeter Wave Communications using Multiple Artificial Intelligence-Powered Intelligent Surfaces}
\author{Mehdi Naderi Soorki$^{1}$, Walid Saad$^{2,4}$, Mehdi Bennis$^{3}$, and Choong Seon Hong$^{4}$\vspace*{0.1cm}\\
\small {$^{1}$Engineering Faculty, Shahid Chamran University of Ahvaz, Iran.}\\
\small {$^{2}$Wireless@VT, Bradley Department of Electrical and Computer Engineering, Virginia Tech, Blacksburg, VA, USA.}\\
\small {$^{3}$Centre for Wireless Communications, University of Oulu, Finland.}\\
\small {$^{4}$ Department of Computer Science and Engineering, Kyung Hee University, South Korea.}\\
\small {Emails: m.naderisoorki@scu.ac.ir, walids@vt.edu, mehdi.bennis@oulu.fi, cshong@khu.ac.kr}
\vspace{-0.7cm}
  \thanks{This research was supported by the U.S. National Science Foundation under Grants CNS-1836802 and CNS-2030215. A preliminary version of this work appears in~\cite{Globecome2019}.}
}
\maketitle
\thispagestyle{empty}

\begin{abstract}
In this paper, a novel framework for guaranteeing ultra-reliable millimeter wave (mmW) communications using multiple artificial intelligence (AI)-enabled reconfigurable intelligent surfaces (RISs) is proposed. The use of multiple AI-powered RISs allows changing the propagation direction of the signals transmitted from a mmW access point (AP) thereby improving coverage particularly for non-line-of-sight (NLoS) areas. However, due to the possibility of highly stochastic blockage over mmW links, designing an intelligent controller to jointly optimize the mmW AP beam and RIS phase shifts is a daunting task. In this regard, first, a parametric risk-sensitive episodic return is proposed to maximize the expected bitrate and mitigate the risk of  mmW link blockage. Then, a closed-form approximation of the policy gradient of the risk-sensitive episodic return is analytically derived. Next, the problem of joint beamforming for mmW AP and phase shift control for mmW RISs is modeled as an identical payoff stochastic game within a cooperative multi-agent environment, in which the agents are the mmW AP and the RISs. Two centralized and distributed controllers are proposed to control the policies of the mmW AP and RISs. To \emph{directly} find a near optimal solution, the parametric functional-form policies for these controllers are modeled using deep recurrent neural networks (RNNs). The deep RNN-based controllers are then trained based on the derived closed-form gradient of the risk-sensitive episodic return. It is proved that the gradient updating algorithm converges to the same locally optimal parameters for deep RNN-based centralized and distributed controllers. Simulation results show that the error between policies of the optimal and the RNN-based controllers is less than $1.5\%$. Moreover, the variance of the achievable rates resulting from the deep RNN-based controllers is $60\%$ less than the variance of the risk-averse baseline.
\end{abstract}
{ \emph{Index Terms}--- Millimeter wave networks; RIS; 5G and  beyond; stochastic games; deep risk-sensitive reinforcement learning.}
\vspace{-0.21cm}
\section{Introduction}\label{sec:Intro}
Millimeter wave (mmW) communications is a promising solution to enable high-speed wireless access in 5G wireless networks and beyond~\cite{Walid2019,Petrov2017}. Nevertheless, the high attenuation and scattering of mmW propagation makes guaranteeing the coverage of mmW wireless networks very challenging. To overcome high attenuation and scattering of mmW propagation challenges, integrating massive antennas for highly directional beamforming at both mmW access point (AP) and user equipment (UE) has been proposed~\cite{Walid2019}. However, applying beamforming techniques will render the use of directional mmW links very sensitive to random blockage caused by people and objects in a dense environment. This, in turn, gives rise to unstable line-of-sight (LoS) mmW links and unreliable mmW communications~\cite{Petrov2017}. To provide robust LoS coverage, one proposed solution is to deploy ultra-dense APs and active relay nodes to improve link quality using multi-connectivity for a given UE~\cite{Petrov2017,Abari2018,Journal2019,Chaccour2}. However, the deployment of multiple mmW APs and active relay nodes is not economically feasible and can lead to high control signalling overhead. To decrease signalling overhead and alleviate economic costs while also establishing reliable communications, recently the use of reconfigurable intelligent surfaces (RISs) has been proposed~\cite{Walid2019,Huang2018,Christos2018,Basar2019,Zhang2019,Chaccour1}.\footnote{Note that, when an RIS is used as a passive reflector, it is customary to use the term intelligent reflecting surface (IRS) to indicate this mode of operation. Meanwhile, when an RIS is used as a large surface with an active transmission, the term large intelligent surface (LIS) is commonly used~\cite{Huang2018,Basar2019,Zhang2019}.}
\vspace{-0.1cm}
\subsection{Prior Works}
RISs are man-made surfaces including conventional reflect-arrays, liquid crystal surfaces, and software-defined meta-surfaces that are electronically controlled~\cite{Walid2000}. In a mmW network enabled with RIS, mmW RISs are turned into a software-reconfigurable entity whose operation is optimized to increase the availability of mmW LoS connectivity. Thus, the RISs reflect the mmW signals whenever possible to bypass the blockages. One of the main challenges in using reconfigurable RISs is how to adjust the phases of the reflected waves from different RISs so that the LOS and reflected mmW signals can be added coherently, and the signal strength of their sum is maximized.

The previous works in~\cite{Abari2018,Huang2018,58,91}, and~\cite{102} use conventional optimization techniques for addressing joint transmit and passive beamforming challenges in an RIS-assisted wireless networks. In~\cite{58} and~\cite{91}, the authors minimize the transmit power for an RIS-enhanced MISO system in both single-user and multi-user scenarios. Alternating optimization (AO) based algorithms for passive beamforming were developed to find locally-optimal solutions. This work shows that an RIS can simultaneously enhance the desired signal strength and mitigate the interference for the multi-user scenario. The same problem was further investigated in~\cite{91} by taking discrete RIS phase shifts into consideration. The authors derive the optimal solutions by applying the branch-and-bound method and exhaustive search for single-user and multi-user scenarios, respectively. In~\cite{102}, the authors proposed a new approach to maximize spectral efficiency in an RIS-enhanced downlink MIMO system. The passive beamforming was designed by using the sum of gains maximization principle and by using the alternating direction method of multipliers. Several recent works such as in~\cite{Abari2018} and~\cite{Huang2018} have been proposed to establish reliable mmW links. In~\cite{Huang2018}, the authors present efficient designs for both transmit power allocation and RIS phase shift control. Their goal is to optimize spectrum or energy efficiency subject to individual rate requirements for UEs. However, the work in~\cite{Huang2018} does not consider stochastic blockage over mmW LoS links and, thus, its results cannot be generalized to a real-world mmW system. In~\cite{Abari2018}, the authors implement a smart mmWave reflector to provide high data rates between a virtual reality headset and game consoles. To handle beam alignment and tracking between the mmWave reflector and the headset, their proposed method must try every possible combination of mirror transmit beam angle and headset receive beam angle incurring significant overhead due to the brute-force solution for beam alignment. However, the works in~\cite{Abari2018,Huang2018,58,91}, and~\cite{102} have some limitations. First, they assumed that the users are generally static for simplicity. Moreover, the communication environment is assumed to be perfectly known and the instantaneous channel state information of all the channels are assumed to be available at the BS. Lastly, the RIS/BS are not capable of learning from the unknown environment or from the limited feedback of the users~\cite{9424177}. Hence, the conventional RIS-enhanced wireless networks designed based on the works in~\cite{Abari2018,Huang2018,58,91}, and~\cite{102} can not guarantee ultra-reliable mmW communication in a real environment with highly dynamic changes due to the users mobility and the risk of NLoS mmW link.

In practice, an intelligent solution that uses tools such as  machine learning (ML) is needed for RIS-assisted mmW networks~\cite{Park2019}. In this regards, the works in~\cite{128,129,144,142} and~\cite{Globecome2019} focused on designing ML-empowered RIS-enhanced wireless networks which can adaptively configure the networks  to unknown random changes in the environment. In~\cite{128}, the authors proposed a deep neural network (DNN)-based approach to estimate the mapping between a user's position and the configuration of an RIS in an indoor environment, with the goal of maximizing the received SNR. In~\cite{129}, the authors proposed an algorithm based on deep learning (DL) for optimally designing the RIS phase shift by training the DNN offline. In~\cite{144}, the authors proposed a deep reinforcement learning (RL) algorithm for maximizing the achievable communication rate by directly optimizing interaction matrices from the sampled channel knowledge. In the proposed deep RL model, only one beam was used for each training episode. In\cite{142}, the authors applied a deep deterministic policy gradient (DDPG) algorithm for maximizing the throughput by using the sum rate as instant rewards for training the DDPG model. However, the works in~\cite{128,129,144,142} did not address the challenges of reliability in highly-dynamic mmW networks when using RL. Moreover, for simplicity, these works did not consider the cooperation and coordination between multiple RISs in their systems. Towards this vision, in~\cite{Globecome2019}, an intelligent controller based on DNNs for configuring mmW RISs is studied. The approach proposed in~\cite{Globecome2019} guarantees ultra-reliable mmW communication and captures the unknown stochastic blockages, but it is limited for a scenario with only one RIS. However, simultaneously employing multiple RISs becomes more challenging due to the need for cooperation amongst RISs~\cite{9424177}. Thus, a new framework to guarantee ultra-reliable communication in the ML-empowered RIS-enhanced mmW networks is needed so as to provide robust, stable and near-optimal solution for the coordinated beamforming and phase-shift control policy.
\vspace{-0.2cm}
\subsection{Contributions}
The main contribution of this paper is to propose \emph{a novel framework for guaranteeing ultra-reliable mobile mmW communications using artificial intelligence (AI)-powered RISs}. The proposed approach allows the network to autonomously form transmission beams of the mmW AP and control the phase of the reflected mmW signal in the presence of stochastic blockage over the mmW links. To solve the problem of joint beamforming and phase shift-control in an RISs-assisted mmW network while guaranteeing ultra-reliable mmW communications, we formulate a stochastic optimization problem whose goal is to maximize a parametric risk-sensitive episodic return. The parametric risk-sensitive episodic return not only captures the expected bitrate but is also sensitive to the risk of NLoS mmW link over future time slots. Subsequently, we use deep and risk-sensitive RL to solve the problem in an online manner. Next, we model the risk-sensitive RL problem as an identical payoff stochastic game in a cooperative multi-agent environment in which the agents are mmW AP and RISs~\cite{Walidbook2000}. Two centralized and distributed controllers are proposed to control the policy of the mmW AP and RISs in the identical payoff stochastic game. To find a near optimal solution, the parametric functional-form policies are implemented using a deep RNN~\cite{Walid2017} which \emph{directly} search the optimal policies of the beamforming and phase shift-controllers. In this regard, we analytically derive a closed-form approximation for the gradient of risk-sensitive episodic return, and the RNN-based policies are subsequently trained using this derived closed-form gradient. We prove that if the centralized and distributed controllers start from the same strategy profile in the policy space of the proposed identical payoff stochastic game, then the gradient update algorithm will converge to the same locally optimal solution for deep RNNs. Simulation results show that the error between the policies of the optimal and RNN-based controllers is small. The performance of deep RNN-based centralized and distributed controllers is identical. Moreover, for a high value of risk-sensitive parameter, the variance of the achievable rates resulting from the deep RNN-based controllers is $60\%$ less than the non-risk based solution. The main continuations of this paper are summarized as follows:
\begin{itemize}
\item We propose a novel smart conrol framework based on artificial intelligence for guaranteeing ultra-reliable mobile mmW communications when multiple RISs are used in an indoor scenario. The proposed approach allows the network to autonomously form transmission beams of the mmW AP and control the phase of the reflected mmW signal from mmW RIS in the presence of unknown stochastic blockage. In this regard, we formulate a new joint stochastic beamforming and phase shift-control problem in an RISs-assisted mmW network under ultra-reliable mmW communication constraint. Our objective is to maximize a parametric risk-sensitive episodic return. The parametric risk-sensitive episodic return not only captures the expected bitrate but is also sensitive to the risk of NLoS mmW link over future time slots.

\item We apply both risk-sensitive deep RL and cooperative multi-agent sysem to find a solution for the joint stochastic beamforming and phase shift-control problem, in an online manner. We model the risk-sensitive RL problem as an identical payoff stochastic game in a cooperative multi-agent environment in which the agents are mmW AP and RISs. Then, we propose two centralized and distributed control policies for the transmission beams of mmW AP and phase shift of RISs.

\item To find a near optimal solution for our proposed centralized and distributed control policies,we implement parametric functional-form policies using a deep RNN which can \emph{directly} search the optimal policies of the beamforming and phase shift controllers. We \emph{analytically} derive a closed-form approximation for the gradient of risk-sensitive episodic return, and the RNN-based policies are subsequently trained using this derived closed-form gradient.

\item We mathematically prove that, if the centralized and distributed controllers start from the same strategy profile in the policy space of the proposed identical payoff stochastic game, then the gradient update algorithm will converge to the same locally optimal solution for deep RNNs.  Moreover, we mathematically show that, at the convergence of the gradient update algorithm for the RNN-based policies, the policy profile under the distributed controllers is a Nash equilibrium equilibrium of the RL and cooperative multi-agent system.

\item Simulation results show that the error between the policies of the optimal and RNN-based cotrollers is small. The performance of deep RNN-based centralized and distributed controllers is identical. Moreover, for a high value of risk-sensitive parameter, the variance of the achievable rates resulting from the deep RNN-based controllers is 60\% less than the non-risk based solution.
\end{itemize}

The rest of the paper is organized as follows. Section~\ref{Sec:Sys-Model} presents the system model and the stochastic and risk-sensitive optimization problem in the smart reflector-assisted mmW networks. In Section~\ref{Sec:Algorithm}, based on the framework of deep and risk-sensitive RL, we propose a deep RNN to solve the stochastic and risk-sensitive optimization problem for the optimal reflector configuration. Then, in Section~\ref{Sec:Simulation}, we numerically evaluate the proposed policy-gradient approach. Finally, conclusions are drawn in Section~\ref{Sec:Conclusion}.
\section{System Model and Problem Formulation}\label{Sec:Sys-Model}
\subsection{System model}
Consider the downlink between an UE and an indoor RIS-assisted mmW network composed of one mmW AP and multiple AI-powered RISs. In this network,  due to the blockage of mmW signals, there exist areas where it is not possible to establish LoS mmW links between the mmW AP and UE, particularly for mobile user. We call these areas as \emph{dark areas}. Each mmW AP and UE will have, respectively, $N_a$ and $N_u$ antennas to form their beams. In our model, there are $G$ AI-powered mmW RISs that intelligently reflect the mmW signals from the mmW AP toward the mobile UE located in the dark areas. Each mmW RIS $g$ is in the form of a uniform planar array (UPA) and consists of $N_g=N_{gh}\times N_{gv}$ meta-surfaces where $N_{gh}$ and $N_{gv}$ separately denote the sizes along the horizontal and vertical dimensions. The size of an RIS at mmW bands will be smaller than the size of a typical indoor scenario or the distance between user and RISs which is often in the order of more than 1 meter in an indoor environment. Thus, as mentioned in recent works such as\cite{Abari2018,Kamoda2011} and~\cite{Tan2018} for mmW communication, we can consider far-field characteristics for mmW signals reflected from an RIS. Here, we consider discrete time slots indexed by $t$. The beam angle of the mmW AP directional antenna is represented by $\theta_{0,t}$ at time slot $t$, where index $0$ represents the mmW AP. If the user is in the LoS coverage of the mmW AP, then $\theta_{0,t}$ is matched to the direction of the user toward the mmW AP. In this case, we assume that the AP transmission beam angle $\theta_{0,t}$ is chosen from a set $\Theta=\{-\pi+\frac{2a\pi}{A-1}|a=0,1,...,A-1\}$ of $A$ discrete values. Let the angle between mmW AP and reflector $g$ be $\phi_g$. However, if the user moves to the dark areas, the mmW AP chooses the antenna direction toward one of the mmW RISs $g$, $\theta_{0,t}=\phi_g$. Hence, $\phi_g \in \Theta$. When the mmW RIS $g$ receives a signal from the mmW AP, the mmW RIS establishes a LoS mmW link using a controlled reflected path to cover the user in the dark areas. Hence, the cascaded channel consists of two parts: the link from the AP to the RIS, i.e., $\boldsymbol{H}_{\text{AP-to-RIS},gt} \in \mathbb{C}^{N_a \times N_g}$, and the link from the RIS to the UE, i.e., $\boldsymbol{H}_{\text{RIS-to-UE},gt}^H \in \mathbb{C}^{N_g \times N_u}$, at time slot $t$~\cite{two}. Hence, the cascaded channel over the AP-RIS-UE link will be $\boldsymbol{H}_{\text{AP-to-RIS},gt} \Psi_{gt} \boldsymbol{H}_{\text{RIS-to-UE},gt}$. We consider an ideal RIS with a reflection gain equal to one. Hence, $\Psi_{gt} \in \mathbb{C}^{N_g \times N_g}$ is a diagonal matrix, i.e., $\Psi_{gt}=\text{diag}\{e^{j\psi_{1,t}},...,e^{j\psi_{N_g,t}}\}$. Notice that $\psi_{i,t}$ in $\Psi_{gt}$ represent the phase shift introduced by each RIS meta-surface $i$. Let $\Psi_{gt} \in \Psi=\{ \Psi^{(b)} |b=1,...,B\}$ have $B$ possible values indexed $b$.

Here, we consider a well-known multi-ray mmW channel model~\cite{Shahmansoori2018}. In this channel model for mmW links, there are $L$ rays between mmW transmitter and receiver, and each ray can be blocked by an obstacle. For $L$ mmW rays, let $\theta_{g,t}=\{\theta_{gl,t}|l=1,2,...,L\}$ and $\omega_{g,t}=\{\omega_{gl,t}|l=1,2,...,L\}$  be, respectively, the azimuth and elevation AoD reflection beams from mmW RIS $g$, and $\phi_{t}=\{\phi_{l,t}|l=1,2,...,L\}$ be the AoA beam at UE, then, the channel matrix over the RIS-to-UE mmW path at time slot $t$ will be given by~\cite{Shahmansoori2018}:
\begin{align}
&\boldsymbol{H}_{\text{RIS-to-UE},gt}= [\boldsymbol{b}_{\text{Tx}}(\theta_{g1,t},\omega_{g1,t}),...,\boldsymbol{b}_{\text{Tx}}(\theta_{gL,t},\omega_{gL,t})] \times \nonumber \\ &\text{diag}( \tilde{\boldsymbol{\alpha}}_{gt}(d_{gu})) \times
 [\boldsymbol{a}_{\text{Rx}}(\phi_{1,t}),...,\boldsymbol{a}_{\text{Rx}}(\phi_{L,t})]^H,
 \label{RIS2UE}
\end{align}
where $\boldsymbol{b}_{\text{Tx}}(\theta_{gl,t},\omega_{gl,t})$ and $\boldsymbol{a}_{\text{Rx}}(\phi_{l,t})$ are the spatial steering vectors~\cite{two}.

Here, $\boldsymbol{a}_{\text{Rx}}(\phi_{l,t})=[e^{j\frac{N_u-1}{2}\pi \cos({\phi}_{l})},...,e^{-j\frac{N_u-1}{2}\pi \cos({\phi}_{l})}]^T\in \mathbb{C}^{N_u \times 1}$ denotes the array response vectors for AoA at the UE for ray $l$. Correspondingly, $\boldsymbol{b}_{\text{Tx}}(\theta_{gl,t},\omega_{gl,t})$, which denotes the array response vectors for AoD at the RIS $g$ for ray $l$, can be written as $\boldsymbol{b}_{\text{Tx}}(\omega_{gl,t},\theta_{gl,t})= \boldsymbol{b}_{\text{el}}(\theta_{l,t}) \otimes \boldsymbol{b}_{\text{az}}(\omega_{gl,t},\theta_{gl,t})\in \mathbb{C}^{N_g \times 1}$, where $\boldsymbol{b}_{\text{el}}(\theta_{gl,t})=[e^{j\frac{N_{gv}-1}{2}\pi \cos(\theta_{gl,t})},...,e^{-j\frac{N_{gv}-1}{2}\pi \cos(\theta_{gl,t})}]^T \in \mathbb{C}^{N_{gv} \times 1}$ and $\boldsymbol{b}_{\text{az}}(\omega_{gl,t},\theta_{gl,t})=[e^{j\frac{N_{gh}-1}{2}\pi \cos(\omega_{gl,t}) \sin(\theta_{gl,t})},...,e^{-j\frac{N_{gh}-1}{2}\pi \cos(\omega_{gl,t}) \sin(\theta_{gl,t}) }]^T \in \mathbb{C}^{N_{gh} \times 1}$.
Furthermore, $\otimes$ represents the Kronecker product operator and $[.]^H$ represents the conjugate transpose. Moreover,
$\tilde{\boldsymbol{\alpha}}_{gt}(d_{gu})=[\alpha_{g1}\sqrt{\rho_{g1}},...,\alpha_{gL}\sqrt{\rho_{gL}}]$ where $\rho_{gl}\in \{ (\frac{c}{2\pi f_c})^2 d_{gu}^{-\nu_L}, (\frac{c}{2\pi f_c})^2 d_{gu}^{-\nu_{NLoS}}\}$ and $\alpha_{gl}$ is the complex channel gain of path $l$ from the mmW RIS $g$ to the UE~\cite{Shahmansoori2018}. Here, $d_{gu}$ is the distance between mmW RIS $g$ and UE, and $\nu_{L}$ and $\nu_{NLoS}$ are the slopes of the best linear fit to the propagation measurement in mmW frequency band for LoS and NLoS mmW links, respectively.Similarly, for $L$ mmW rays, let $\upsilon_{g,t}=\{\upsilon_{gl,t}|l=1,2,...,L\}$ and $\phi_{g,t}=\{\phi_{gl,t}|l=1,2,...,L\}$  be respectively the azimuth and elevation AoA beams at RIS $g$, and $\theta_{0,t}=\{\theta_{0l,t}|l=1,2,...,L\}$ be the AoD beam from AP, then, the channel matrix over the AP-to-RIS mmW path at time slot $t$ is given by~\cite{Shahmansoori2018}:
\begin{align}
&\boldsymbol{H}_{\text{AP-to-RIS},t}=
[\boldsymbol{a}_{\text{Tx}}(\theta_{01,t}),...,\boldsymbol{a}_{\text{Tx}}(\theta_{0L,t})] \nonumber \\ &
\times \text{diag}(\tilde{\boldsymbol{\beta}}_{gt}(d_{g}) \times
[\boldsymbol{b}_{\text{Rx}}(\upsilon_{g1,t},\phi_{g1,t}),...,\boldsymbol{b}_{\text{Rx}}(\upsilon_{gL,t},\phi_{gL,t})]^{H},
\label{AP2RIS}
\end{align}
where the structure of $\boldsymbol{b}_{\text{Rx}}(\upsilon_{gl,t},\phi_{gl,t})$ and $\boldsymbol{a}_{\text{Tx}}(\theta_{0l,t})$ are similar to that of $\boldsymbol{b}_{\text{Tx}}(\omega_{gl,t},\theta_{gl,t})$ and $\boldsymbol{a}_{\text{Rx}}(\phi_{l,t})$, respectively. Moreover, the structure of  $\tilde{\boldsymbol{\beta}}_{gt}(d_{g})$ is similar to  $\tilde{\boldsymbol{\alpha}}_{gt}(d_{gu})$
where $d_{g}$ is the distance between the mmW AP and RIS $g$. Consequently, the channel matrix between the mmW AP and UE over one mmW AP-to-UE link and $G$ mmW AP-to-RIS-to-UE links is defined as:
\begin{align}
& \boldsymbol{H}_t(\theta_{0,t},\Psi_{1t},...,\Psi_{Gt})= \boldsymbol{H}_{\text{AP-to-UE},t}+\nonumber \\ &
\sum_{g=1}^G \boldsymbol{H}_{\text{AP-to-RIS},gt} \Psi_{gt} \boldsymbol{H}_{\text{RIS-to-UE},gt},
\label{H_total}
\end{align}
where $\boldsymbol{H}_{\text{AP-to-UE},t}=[\boldsymbol{a}_{\text{Tx}}(\theta_{01,t}),...,\boldsymbol{a}_{\text{Tx}}(\theta_{0L,t})] \times \text{diag}(\tilde{\boldsymbol{\gamma}}_{t}(d_u)) \times [\boldsymbol{a}_{\text{Rx}}(\phi_{1,t}),...,\boldsymbol{a}_{\text{Rx}}(\phi_{L,t})]^H$ is the channel matrix over the mmW AP-to-UE link. Here, the structure of  $\tilde{\boldsymbol{\gamma}}_{t}(d_{u})$ is similar to  $\tilde{\boldsymbol{\alpha}}_{gt}(d_{gu})$ where $d_{u}$ is the distance between mmW AP and UE.

In our model, there are two links: transmission link and control link. The transmission link uses the mmW band to send data over AP-to-UE or AP-to-RIS-to-UE links. The sub-6 GHz link is only used to transmit control signals to the controllers of mmW AP and RISs.  At the beginning of the downlink transmission, since the exact location of the UE is not known to the controller, we apply the three step low-complexity beam search algorithm presented in~\cite{ChannelEstimation} in our model to find the angle-of-departure of transmission beam from mmW AP $\theta_{0,t}$ for $t=0$, and complex path gains, either LoS or NLoS links to the UE. As a result, at the beginning of transmission, if an LoS complex path gain is found, the mmW AP forms its beam directly toward the UE. But, if a NLoS complex path gain is found, the mmW AP sequentially forms its beam toward the mmW reflectors and the beam search algorithm will be applied again. In this case, each mmW reflector changes their reflection angle to sweep all the dark area, until the LoS path loss gain between mmW reflector and UE, and initial relection angles of RISs, $\theta_{g,t}$ for $t=0$, are detected. However in the future time slots, the availability of LoS link as well as the channel gain are random variables with unknown distributions due to the mobility of user, and the AoA signals $\tilde{\phi}_{t}$ at the UE  is a stochastic variable which randomly changes due to unknown factors such as the user's orientation. In our model, the channel can be considered as a multi-ray channel between mmW AP as transmitter node and UE as a receiver node or a MIMO channel between $N_a$ transmitter antenna and $N_u$ receiver antenna where there are $G$ reflectors as the scatterers between them. Thus, following the multi-ray channel model in (\ref{H_total}) and the received bitrate in a RIS-assisted MIMO networks~\cite{9403420}, the total achievable bitrate is given by:
\begin{equation}
r_{t}(\theta_{0,t},\Psi_{1t},...,\Psi_{Gt})=w \log_2 \text{det}\left[\boldsymbol{I}_{N_a}+\frac{q}{N_a w\sigma^2}
\boldsymbol{H}_t \boldsymbol{H}_t^H\right],
\label{Bit_rate}
\end{equation}
where $q$ is the transmission power, $w$ is the mmW bandwidth, and $\sigma^2$ is the noise density.

Fig.~\ref{System_model_1} is an illustrative example that shows how one mmW AP and two mmW RISs are used to bypass the blockage during four consecutive time slots $t$ and $t+3$. As seen in Fig.~\ref{System_model_1}, since the user is in the dark area during time slots $t$ and $t+2$ for mmW AP, the mmW RISs are used to provide coverage for the user. Here, during two time slots $t$ and $t+1$, the mmW AP transmits the signal toward the reflector $1$, and then this reflector reflects the signal toward the user moving in the dark area $1$. Thus, the beam angles of mmW AP signals are $\theta_{0,t}=\phi_1$ and $\theta_{0,t+1}=\phi_{1}$ at time slots $t$ and $t+1$. Then, since the user is moving in the dark area $2$ during time slot $t+2$, the mmW AP transmits the signal toward the mmW RIS $2$, $\theta_{0,t+2}=\phi_2$. In this case, the user is not in the LoS coverage of reflector $1$ and the reflector $2$ reflects the signal toward $\theta_{2,t+2}$ to cover the user at time slot $t+2$. As shown in Fig.~\ref{System_model_1}, the user is not in any dark area at time slot $t+3$ and mmW AP can directly transmit the signal over LoS link toward the user, $\theta_{0,t+3}$. The list of our main notations is given in Table~\ref{tab:Symbols}.
\begin{figure}[!t]
	\begin{center}
		\includegraphics[width=8.5cm]{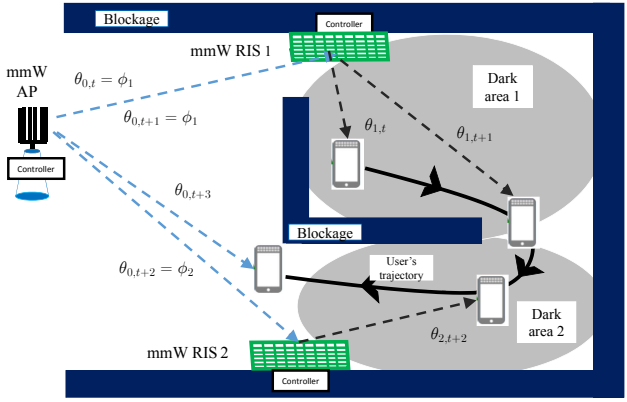}		\vspace{-0.1cm}
		\caption{ \small An illustrative example of the system model with one mmW AP and two mmW RISs.}
        \vspace{-0.4cm}
		\label{System_model_1}
	\end{center}
\end{figure}

As a result, the phase shift-control policy must not only consider the unknown future trajectory of mobile users but also adapt itself to the possible stochastic blockages in future time slots. In this case, an intelligent policy for the phase shift-controller, which can predict unknown stochastic blockages, is required for mmW AP and mmW RISs to guarantee ultra-reliable mmW communication, particularly for indoor scenarios with many dark areas.
\begin{table}[t]
\caption{List of our notations.}
\small
\vspace{-.5cm}
\begin{center}
\begin{tabular}{l | l } \toprule
{ \textbf{Symbol}}  & { \textbf{Definition}}   \\  \rowcolor[gray]{.9} \hline
$N_a$   &  Number of mmW AP antennas\\
$N_u$   &  Number of mmW AP antennas\\ \rowcolor[gray]{.9}
$N_g$   & Number of meta-surfaces per mmW RIS\\
$\theta_{0,t}$  & The beam angle of AP at timeslot $t$\\ \rowcolor[gray]{.9}
$\Psi_{gt}$ & The phase shift introduced by RIS $g$ at times lot $t$\\
$\phi_{g}$  & The AoA at RIS $g$\\ \rowcolor[gray]{.9}
$\tilde{\phi}_{t}$  & The stochastic AoA at UE at timeslot $t$\\
$\boldsymbol{H}_{0,t}$  & The channel matrix of the AP-to-UE mmW path\\ \rowcolor[gray]{.9}
$\boldsymbol{H}_{g,t}$   & The channel matrix of the AP-to-reflector-to-UE path\\
$\tilde{\alpha}_{g,t}$ & The stochastic complex gain over path $g$ at time slot $t$\\ \rowcolor[gray]{.9}
$p_{0,t}^{(a)}$ & The beamforming-control policy of the mmW AP\\
$p_{g,t}^{(b)}$   & The phase shift-control policy of the mmW RIS\\ \rowcolor[gray]{.9}
$r_t$ & The achievable bitrate \\
$\mu$   & The risk sensitivity parameter\\ \rowcolor[gray]{.9}
$\mathcal{M}$ & The set of $M=G+1$ agents \\
$\mathcal{A}$   & The set of joint action space of the agents\\ \rowcolor[gray]{.9}
$\boldsymbol{s}_t$ & The state of POISG at time slot $t$ \\
$a_{m,t}$   & Action of agent $m$ at time slot $t$\\ \rowcolor[gray]{.9}
$T$ & Number of future consecutive time slots \\
$\Lambda_{T}$  & Trajectory of the POIPSG during $T$ time slots\\ \rowcolor[gray]{.9}
$R_{T,t}$ & rate summation during consecutive $T$ time slots \\
$\mathcal{H}_t$ & The global history at time slot $t$ \\ \rowcolor[gray]{.9}
$\mathcal{H}_{m,t}$   & The history for agent $m$ at time slot $t$\\
$\pi_{\boldsymbol{\theta}_m}$ & The parametric functional-form policy of agent $m$ \\ \rowcolor[gray]{.9}
$J(\boldsymbol{\theta},t)$  & The risk-sensitive episodic return at time slot $t$\\
$\Pi_{\boldsymbol{\theta}}(T)$   & Probability of trajectory during $T$ time slots under\\
& parametric policies$\{\pi_{\boldsymbol{\theta}_m}|\forall m \in \mathcal{M}\}$\\
\rowcolor[gray]{.9}
\hline
\end{tabular}
\end{center}
\label{tab:Symbols}
\vspace{-9mm}
\end{table}

\vspace{-0.4cm}
\subsection{Phase-shift controller for RIS-assisted mmW networks}\label{Sec:Optimum}
We define $\boldsymbol{P}_{0,t}=[p_{0,t'}^{(a)}]_{A \times T}$ as the beamforming-control policy of the mmW AP at time slot $t$, where $p_{0,t'}^{(a)}=\Pr(\theta_{0,t'}=\frac{-\pi}{2}+\frac{2a\pi}{A-1})$ is essentially the probability that the mmW AP selects the $a$-th beam angle from set $\Theta$ at time slot $t' \in \{t,...,t+T-1\}$. Next, we define $\boldsymbol{P}_{g,t}=[p_{g,t'}^{(b)}]_{B \times T}$ as the phase shift-control policy of the mmW RIS, where $p_{g,t'}^{(b)}=\Pr(\Psi_{gt'}=\Psi^{(b)})$ is the probability that the mmW RIS $g$ selects the $b$-th phase shift to reflect the received signal from the mmW AP toward the UE at time slot $t' \in \{t,...,t+T-1\}$.

Due to the stochastic changes of the mmW blockage between mmW AP or reflector and UE, and random changes in the user's orientation, the transmission and phase shift-control policies at a given slot $t$ will depend on unknown future changes in the LoS mmW links. Consequently, to guarantee ultra-reliable mmW links subject to future stochastic changes over mmW links, we mitigate the notion of risk instead of maximizing the expected future rate. Concretely, we adopt the entropic value-at-risk (EVaR) concept that is defined as $\frac{1}{\mu} \log\big( \mathbb{E}_{{r}_{t'}} \{e^{(-\mu \sum_{t'=t}^{t+T-1} {r}_{t'})}\} \big)$\cite{five}. Here, the operator $\mathbb{E}$ is the expectation operation. Expanding the Maclaurin series of the $\log$ and $\exp$ functions shows that EVaR takes into account higher order moments of the stochastic sum rate $\sum_{t'=t}^{t+T-1} {r}_{t'}$ during future $T$ consecutive time slots~\cite{four}. Consequently, we formulate the joint beamforming and phase shift-control problem for an RIS-assisted mmW network as follows:
\begin{align}
& \underset{\left\{
\substack{\boldsymbol{P}_{g,t},\\
\forall g \in \{0,...,G\}
}
\right\}}
\max \frac{1}{\mu} \log\big( \mathbb{E}_{{r}_{t'}} \{e^{(-\mu \sum_{t'=t}^{t+T-1} {r}_{t'})}\} \big), \label{Opt_prob}\\
& \hspace{0.1in} \sum_{a=1}^{A} p_{0,t'}^{(a)} = 1, \forall t' \in \{t,...,t+T-1\}, \label{Opt_prob_c1}\\
& \hspace{0.1in} \sum_{b=1}^{B} p_{g,t'}^{(b)} = 1, \forall t' \in \{t,...,t+T-1\},\forall g \in \{1,...,G\}, \label{Opt_prob_c2}\\
& \hspace{0.1in} 0\leq p_{0,t'}^{(a)} \leq 1, \forall a \in \{1,...,A\},\forall t' \in \{t,...,t+T-1\},\label{Opt_prob_c3}\\
& \hspace{0.1in} 0\leq p_{g,t'}^{(b)} \leq 1, \forall b \in \{1,...,B\},\forall t' \in \{t,...,t+T-1\},\nonumber \\ &
\forall g \in \{1,...,G\}, \label{Opt_prob_c4}
\end{align}
where the parameter $0 \leq \mu < 1$ denotes the risk sensitivity parameter~\cite{four}. In (\ref{Opt_prob}), the objective is to maximize the average of episodic sum of future bitrate, $\sum_{t'=t}^{t+T-1} {r}_{t'}$, while minimizing the variance to capture the rate distribution, using joint beamforming and phase shift-control policies of mmW AP and reflectors during future time slots. The risk sensitivity parameter penalizes the variance and skewness of the episodic sum of future bitrate. In (\ref{Opt_prob}),  $\{{r}_{t'}|t'=t,...,t+T-1\}$ depends on the beam angle of mmW AP, phase shift angle of mmW RIS, and the unknown AoA from user's location during future $T$-consecutive time slots.

The joint beamforming and phase shift-control problem in (\ref{Opt_prob}) is a stochastic optimization problem that does not admit a closed-form solution and has an exponential complexity~\cite{Sutton2018}. The complexity of the stochastic optimization problem in (\ref{Opt_prob}) becomes more significant due to the unknown probabilities for possible random network changes such as the mmW link blockages and the user's locations\cite{Sutton2018} as well as the large size of the state-action space. Moreover, since a mmW link can be blocked by the user's body, even if the user's location and surroundings around are fixed, we would still have self blockage by the user's body. Thus, even after initial beam tracking to find the location of these users, they can not be served in very long time. Therefore, we seek a low-complexity control policy to solve (\ref{Opt_prob}) that can intelligently  adapt to mmW link dynamics over future time slots. In this regard, we propose a framework based on principles of risk-sensitive deep RL and cooperative multi-agent system to solve the optimization problem in (\ref{Opt_prob}) with low complexity and in an adaptive manner.
\vspace{-0.8cm}
\section{Intelligent Beamfroming and Phase Shift-Control Policy}\label{Sec:Algorithm}
In this section, we present the proposed gradient-based and adaptive policy search method based on a new deep and risk-sensitive RL framework to solve the joint beamforming and phase shift-control problem in (\ref{Opt_prob}) in a coordinated and distributed manner. We model the problem in (\ref{Opt_prob}) as an identical payoff stochastic game (IPSG) in a cooperative multi-agent environment~\cite{Walidbook2000}. An IPSG describes the interaction of a set of agents in a Markovian environment in which agents receive the same payoffs\cite{Polecy2000}.

An IPSG is defined as a tuple $<\mathcal{S},\mathcal{M},\mathcal{A},\mathcal{O},T,R,o_0>$, where $\mathcal{S}$ is the state space, $\mathcal{M}=\{0,1,...,G\}$ is a set of $M=G+1$ agents in which index $0$ refers to the mmW AP and indexes 1 to $G$ represent the mmW RISs. $\mathcal{A}=\prod_{i \in \mathcal{M}} \mathcal{A}_i$ is the set of joint action space of the agents in which $\mathcal{A}_0=\Theta$ is the set of possible transmission directions for mmW AP and $\mathcal{A}_g=\Psi,\forall g=1,...,G$ is the set of possible phase shift for mmW RISs. The observation space $\mathcal{O}=\mathbb{R}$ is the bitrate over mmW link $r_{t}\in\mathcal{O}$. Here, $T:\mathcal{S} \times \mathcal{A}\rightarrow \Pr(\mathcal{S})$ is the stochastic state transition function from states of the environment, $s \in \mathcal{S}$ and joint actions of the agents, $\boldsymbol{a}\in \mathcal{A}$ to probability distributions over states of the environment, $T(\boldsymbol{s}',\boldsymbol{s},\boldsymbol{a}) = \Pr(\boldsymbol{s}_{t+1}= \boldsymbol{s}'|\boldsymbol{s}_t = \boldsymbol{s}, \boldsymbol{a}_t = \boldsymbol{a})$. $R(\boldsymbol{s}_t, \boldsymbol{a}_t)$ is the immediate reward function, and $o_0$ is the initial observation for the controllers of the mmW AP and reflectors\cite{three}.

Here, the immediate reward function, $R(\boldsymbol{s}_t, \boldsymbol{a}_t)$, is equal to the received bitrate which is given by (\ref{Bit_rate}). And the state $\boldsymbol{s}_t=\{\tilde{\boldsymbol{\alpha}}_{gt},\tilde{\boldsymbol{\beta}}_{gt}|g=0,...,G\}\cup\{\tilde{\gamma}_t\}\cup\{\tilde{\phi}_{t}\}$ includes complex path gains for all paths $g=0,...,G$ and AoA at UE at time slot $t$. Due to the dynamics over the mmW paths, the state, $\boldsymbol{s}_t$, and state transition function, $T(\boldsymbol{s}',\boldsymbol{s},\boldsymbol{a})$, are not given for the beamforming controller of mmW AP and phase shift-controllers of mmW RISs. Since all agents in $\mathcal{M}$ have not an observation function for all $s\in \mathcal{S}$, the game is a partially observable IPSG (POIPSG). Due to the partially observability of IPSG, a strategy for  agent $m$ is a mapping from the history of all observations from the beginning of the game into the current action $a_t$. Hereinafter, we limit our consideration to cases in which the agent has a finite internal memory including the history for agent $m$ at time slot $t$, $\mathcal{H}_{m,t}$. $\mathcal{H}_{m,t}=\{(a_{m,h},r_h)|h=t-H,...,t-1\}$ is a set of actions and observations for agent $m$ during $H$ consecutive previous time slots. We also define $\mathcal{H}_{t}=\cup_{m \in \mathcal{M}}\mathcal{H}_{m,t}$ as the global history.

Next we define a policy as the probability of action given past history as a continuous differentiable function of some set of parameters. Hence, we represent the policy of each agent $m$ of the proposed POIPSG in a parametric functional-form $\pi_{\boldsymbol{\theta}_m}(a_{m,t}|\mathcal{H}_{m,t})=\Pr\{a=a_{m,t}| \mathcal{H}_{m,t},\boldsymbol{\theta}\}$ where $\boldsymbol{\theta_m}$ is a parameter vector for agent $m$. If $\Lambda_{T}=\{(\boldsymbol{a}_{t'},r_{t'})|t'=t,...,t+T-1\}$ is a trajectory of the POIPSG during $T$-consecutive time slots, then the stochastic episodic reward function during future $T$-consecutive time slots is defined as $R_{T,t}=\sum_{t'=t}^{t+T-1}r_{t'}$. Here, we are interested in implementing a distributed controller in which the mmW AP and RISs act independently. Thus, the unknown probability of trajectory $\Lambda_{T}$ is equal to $\Pi_{\boldsymbol{\theta}}(T)=\prod_{t'=t}^{t+T-1} \prod_{m\in \mathcal{M}} \pi_{\boldsymbol{\theta}_m}(a_{m,t'}|\mathcal{H}_{m,t'}) \Pr\{r_{(t'+1)}|\boldsymbol{a}_{t'},\mathcal{H}_{m,t'}\}$ if the agents in $\mathcal{M}$ act independently.

In what follows we define the risk-sensitive episodic return for parametric functional-form policies $\pi_{\boldsymbol{\theta}_m},\forall m\in \mathcal{M},$ at time slot $t$ as $J(\cup_{m \in \mathcal{M}} \boldsymbol{\theta}_m ,t) =\frac{1}{\mu} \log\big( \mathbb{E}_{R_{T,t}} \{e^{(-\mu R_{T,t})}\} \big)$~\cite{four}. Given the parametric functional-form policies, $\pi_{\boldsymbol{\theta}_m},\forall m\in \mathcal{M}$, the goal of the transmission and phase shift controller is to solve the following optimization problem:
\begin{align}
& \underset{\left\{
\substack{ \cup_{m \in \mathcal{M}} \boldsymbol{\theta}_m }
\right\}}
\max J(\boldsymbol{\theta},t), \\
& \hspace{0.1in} 0\leq \pi_{\boldsymbol{\theta}_m}(a_{m,t'}|\mathcal{H}_{m,t'}) \leq 1, \forall a_{m,t'} \in \mathcal{A}_m,\forall m \in\mathcal{M},
\\ \nonumber
& \hspace{0.1in} \forall t' \in \{t,...,t+T-1\},\\
& \hspace{0.1in} \sum_{\forall a_{m,t'} \in \mathcal{A}_m} \pi_{\boldsymbol{\theta}_m}(a_{m,t'}|\mathcal{H}_{m,t'}) = 1, \forall m \in \mathcal{M},
\\ \nonumber
& \hspace{0.1in} \forall t' \in \{t,...,t+T-1\},\\
&  \hspace{0.1in} \boldsymbol{\theta}_m \in \mathbb{R}^{N}, \forall m \in \mathcal{M},
\label{Opt_prob_2}
\end{align}
where $T<<N$. We will define the parameter vector $\boldsymbol{\theta}$ and the value of $N$ in Subsection~\ref{Sec:Algorithm_A}.

To solve the optimization problem in (\ref{Opt_prob_2}), the controller needs to have full knowledge about the transition probability $\Pi_{\boldsymbol{\theta}}(T)$, and all possible values of $R_{T,t}$ for all of the trajectories during $t'=t,...,t+T-1$ from the POIPSG under policies $\pi_{\boldsymbol{\theta}_m},\forall m \in\mathcal{M}$. Since the explicit characterization of the transition probability and values of the episodic reward for all of the trajectories is not feasible in highly dynamic mmW neworks, we use an RL framework to solve (\ref{Opt_prob_2}). More specifically, we use a \emph{policy search approach} to find the optimal transmission angle and phase shift-control policies to solve problem in (\ref{Opt_prob_2}) for the following reasons. First, value-based approaches such as $Q$-learning are oriented toward finding deterministic policies. However, the optimal policy is often stochastic and policy-search approaches can select different phase shifts with specific probabilities by adaptively tuning the parameters in $\boldsymbol{\theta}$ ~\cite{Sutton2018}. Second, value-based RL methods are oriented toward finding deterministic policies, and they use a parameter, $\epsilon$, as an exploration-exploitation tradeoff to apply other possible policies~\cite{Sutton2018}. However, In policy search approach, the exploration-exploitation tradeoff is explicitly applied due to the direct modeling of probabilistic policy~\cite{Sutton2018}. Third, any small change in the estimated value of an action can cause it to be (or not) selected in the value-based approaches. In this regard, the most popular policy-search method is the policy-gradient method where the gradient objective function is calculated and used in gradient-ascend algorithm. The gradient $\nabla J(\boldsymbol{\theta},t)$ of the risk-sensitive objective function is approximated as follows.
\begin{proposition}
\textnormal{The gradient of the objective function, $J(\boldsymbol{\theta},t)$, in (\ref{Opt_prob_2}) is approximated by:}
\begin{align}
&\nabla_{\boldsymbol{\theta}}J(\boldsymbol{\theta},t)\approx
 \mathbb{E}_{\Lambda_{T}}\{\nabla_{\boldsymbol{\theta}}\log\Pi_{\boldsymbol{\theta}}(T) \times
\nonumber \\ &
\big((1+\mu \mathbb{E}_{\Lambda_{T}}\{R_{T,t}\}) R_{T,t}-\frac{\mu}{2}R_{T,t}^2\big)\},
\label{gradient_equation}
\end{align}
\textnormal{where $\mathbb{E}_{\Lambda_{T}}\{R_{T,t}\}=\sum_{\Lambda_{T}}\Pi_{\boldsymbol{\theta}}(T) R_{T,t}$. Under distributed controller in which mmW AP and RISs act independently, $ \nabla_{\boldsymbol{\theta}}\log\Pi_{\boldsymbol{\theta}}(T)=\sum_{t'=t}^{t+T-1} \sum_{m \in \mathcal{M}} \nabla_{\boldsymbol{\theta}_m} \log\pi_{\boldsymbol{\theta}_m}(a_{m,t'}|\mathcal{H}_{m,t'})$.
}
\end{proposition}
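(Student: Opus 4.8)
The plan is to combine the likelihood-ratio (score function) identity with a low-order Maclaurin expansion in the risk parameter $\mu$, which is exactly the expansion the excerpt already invokes when motivating EVaR. First I would write the return as $J(\boldsymbol{\theta},t)=\frac{1}{\mu}\log Z(\boldsymbol{\theta})$ with the partition-like quantity $Z(\boldsymbol{\theta})=\mathbb{E}_{\Lambda_{T}}\{e^{-\mu R_{T,t}}\}=\sum_{\Lambda_{T}}\Pi_{\boldsymbol{\theta}}(T)\,e^{-\mu R_{T,t}}$, observing that $\boldsymbol{\theta}$ enters \emph{only} through the trajectory law $\Pi_{\boldsymbol{\theta}}(T)$, while for a fixed trajectory the realized return $R_{T,t}$ is a constant. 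Differentiating and applying the log-derivative identity $\nabla_{\boldsymbol{\theta}}\Pi_{\boldsymbol{\theta}}(T)=\Pi_{\boldsymbol{\theta}}(T)\,\nabla_{\boldsymbol{\theta}}\log\Pi_{\boldsymbol{\theta}}(T)$ yields the exact expression
\[
\nabla_{\boldsymbol{\theta}}J(\boldsymbol{\theta},t)=\frac{1}{\mu}\,\frac{\mathbb{E}_{\Lambda_{T}}\{\nabla_{\boldsymbol{\theta}}\log\Pi_{\boldsymbol{\theta}}(T)\,e^{-\mu R_{T,t}}\}}{\mathbb{E}_{\Lambda_{T}}\{e^{-\mu R_{T,t}}\}}.
\]

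Next I would expand both exponentials to second order, $e^{-\mu R_{T,t}}\approx 1-\mu R_{T,t}+\tfrac{\mu^{2}}{2}R_{T,t}^{2}$. In the numerator the zeroth-order term vanishes by the standard baseline identity $\mathbb{E}_{\Lambda_{T}}\{\nabla_{\boldsymbol{\theta}}\log\Pi_{\boldsymbol{\theta}}(T)\}=\sum_{\Lambda_{T}}\nabla_{\boldsymbol{\theta}}\Pi_{\boldsymbol{\theta}}(T)=\nabla_{\boldsymbol{\theta}}1=0$, which leaves a leading factor of $\mu$ that cancels the $\frac{1}{\mu}$ out front. In the denominator I would expand $(1-\mu\,\mathbb{E}_{\Lambda_{T}}\{R_{T,t}\}+\cdots)^{-1}$ as a geometric series to first order and keep every term through $O(\mu)$; collecting these and factoring out $\nabla_{\boldsymbol{\theta}}\log\Pi_{\boldsymbol{\theta}}(T)$ produces the scalar weight $(1+\mu\,\mathbb{E}_{\Lambda_{T}}\{R_{T,t}\})R_{T,t}-\tfrac{\mu}{2}R_{T,t}^{2}$, which is precisely the claimed form (up to the global sign fixed by the maximization direction). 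An equivalent and arguably cleaner route first expands the return itself as $J\approx -\mathbb{E}_{\Lambda_{T}}\{R_{T,t}\}+\tfrac{\mu}{2}\mathrm{Var}(R_{T,t})+O(\mu^{2})$ and then differentiates term by term using the same score identity together with $\nabla_{\boldsymbol{\theta}}(\mathbb{E}_{\Lambda_{T}}\{R_{T,t}\})^{2}=2\,\mathbb{E}_{\Lambda_{T}}\{R_{T,t}\}\,\mathbb{E}_{\Lambda_{T}}\{\nabla_{\boldsymbol{\theta}}\log\Pi_{\boldsymbol{\theta}}(T)\,R_{T,t}\}$, recovering the same weight.

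The distributed-controller statement then follows directly from the factorization $\Pi_{\boldsymbol{\theta}}(T)=\prod_{t'=t}^{t+T-1}\prod_{m\in\mathcal{M}}\pi_{\boldsymbol{\theta}_m}(a_{m,t'}|\mathcal{H}_{m,t'})\,\Pr\{r_{t'+1}|\boldsymbol{a}_{t'},\mathcal{H}_{m,t'}\}$ asserted earlier for independently acting agents: taking $\log$ turns the products into the double sum $\sum_{t'}\sum_{m}\log\pi_{\boldsymbol{\theta}_m}+\sum_{t'}\log\Pr\{r_{t'+1}|\cdots\}$, and since the reward-transition factors carry no dependence on $\boldsymbol{\theta}$ their gradient vanishes, leaving only $\sum_{t'=t}^{t+T-1}\sum_{m\in\mathcal{M}}\nabla_{\boldsymbol{\theta}_m}\log\pi_{\boldsymbol{\theta}_m}(a_{m,t'}|\mathcal{H}_{m,t'})$ as stated.

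I expect the main obstacle to be not the algebra but making the ``$\approx$'' precise: I must pin down exactly which higher-order-in-$\mu$ terms are discarded (the truncation of both the exponential and the geometric series) and argue the discarded remainder is $O(\mu^{2})$ and hence negligible for the small risk regime $0\le\mu<1$. Secondary technical points are justifying the interchange of $\nabla_{\boldsymbol{\theta}}$ with the trajectory summation/expectation (differentiability and summability of the parametric policies $\pi_{\boldsymbol{\theta}_m}$, which the deep-RNN parametrization supplies), and confirming that $R_{T,t}$ is treated as $\boldsymbol{\theta}$-independent for each fixed trajectory so that the score term is the only object differentiated.
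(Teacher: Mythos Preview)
Your proposal is correct, and your second (``cleaner'') route---expand $J$ to the mean--variance form and then differentiate term by term using the score identity together with $\nabla_{\boldsymbol{\theta}}(\mathbb{E}_{\Lambda_T}\{R_{T,t}\})^{2}=2\,\mathbb{E}_{\Lambda_T}\{R_{T,t}\}\,\mathbb{E}_{\Lambda_T}\{\nabla_{\boldsymbol{\theta}}\log\Pi_{\boldsymbol{\theta}}(T)\,R_{T,t}\}$---is exactly the argument the paper gives (the paper writes $J(\boldsymbol{\theta},t)\simeq\mathbb{E}_{\Lambda_T}\{R_{T,t}\}-\tfrac{\mu}{2}\mathrm{Var}_{\Lambda_T}\{R_{T,t}\}$, splits the variance, differentiates, and then invokes $\nabla_{\boldsymbol{\theta}}\log\Pi_{\boldsymbol{\theta}}(T)=\nabla_{\boldsymbol{\theta}}\Pi_{\boldsymbol{\theta}}(T)/\Pi_{\boldsymbol{\theta}}(T)$). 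The distributed-controller part of your argument is likewise identical to the paper's: take logs of the factorized trajectory law, drop the $\boldsymbol{\theta}$-independent transition factors, and sum.

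Your primary route---differentiate the exact $\tfrac{1}{\mu}\log Z(\boldsymbol{\theta})$ first and only then Taylor-expand numerator and denominator---is a genuinely different order of operations that the paper does not use. It has the modest advantage of producing an \emph{exact} gradient formula before any truncation, which makes the $O(\mu^{2})$ remainder you worry about completely transparent (one sees precisely which terms are dropped in the exponential and geometric expansions). The paper's expand-then-differentiate route is shorter but leaves the error control implicit. Either way the algebra lands on the same weight, so both approaches are fine; your attention to justifying the interchange of $\nabla_{\boldsymbol{\theta}}$ with the trajectory sum and to the $\boldsymbol{\theta}$-independence of $R_{T,t}$ on a fixed trajectory goes slightly beyond what the paper spells out.
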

\begin{proof}
See Appendix A.
\end{proof}

Following Proposition 1, we can use (\ref{gradient_equation}) to solve the optimization problem in (\ref{Opt_prob_2}) using a gradient ascent algorithm and, then, find the near optimal control policies. To calculate (\ref{gradient_equation}), we need a lookup table of all trajectories of risk-sensitive values and policies over time. However, this lookup table is not available for a highly dynamic indoor mmW networks. To overcome this challenge, we combine DNN with the RL policy-search method. Such a combination was studied in~\cite{three}, where a DNN learns a mapping from the partially observed state to an action without requiring any lookup table of all trajectories of the risk-sensitive values and policies over time. Next, we propose an RL algorithm that uses a DNN based on policy gradient for solving (\ref{Opt_prob_2}).
\vspace{-0.4cm}
\subsection{Implementation of Phase-shift controller with DNN}\label{Sec:Algorithm_A}
We use a DNN to approximate the policy $\pi_{\boldsymbol{\theta}_{m}},\forall m \in \mathcal{M}$ for solving (\ref{Opt_prob_2}). Here, the parameters $\boldsymbol{\theta} \in \mathbb{R}^{N}$ include the weights over all connections of the proposed DNN where $N$ is equal to the number of connections~\cite{three}. We consider two implementations of the beamforming and phase shift-controllers: centralized and distributed.

1) \emph{Centralized controller:}
the centralized controller has enough memory to record the global history $\mathcal{H}_t=\cup_{m \in \mathcal{M}} \mathcal{H}_{m,t}$ and computational power to train the proposed RNN in Fig.~\ref{System_model_21}. Thus, the deep RNN directly implements the independent beamforming and phase shift-control policies $\pi_{\boldsymbol{\theta}}(\boldsymbol{a}_{t'}|\mathcal{H}_{t'})$ for $t'=t,...,t+T-1$ given the global history $\mathcal{H}_t$ and $\boldsymbol{\theta}=\cup_{m \in \mathcal{M}}\boldsymbol{\theta}_m$. Then, the policy is transmitted from the centralized controller to the mmW AP and RISs through the control links. Indeed, the centralized controller is a policy mapping observations to the complete joint distribution over set of joint action space $\mathcal{A}$. The deep RNN that implements the centralized controller is shown in Fig.~\ref{System_model_21}. This deep RNN includes 3 long short term memory (LSTM) cells, 3 fully connected, 3 rectified linear unit (Relu), and $M$ Softmax layers. The 3 LSTM layers have layers of $H$, $\frac{H}{2}$, and $\frac{H}{4}$ memory cells.

The main reason for using the RNN to implement the controller is that unlike feedforward neural networks (NNs), the RNNs can use their internal state to process sequences of inputs. This allows RNNs to capture the dynamic temporal behaviors of a system such as highly dynamic changes over mmW links between mmW AP and reflectors in an indoor scenarios~\cite{Sutton2018}. Thus, we implement the controller using LSTM networks. An LSTM is an artificial RNN architecture used in the field of deep learning. In this case, the LSTM-based controller has enough memory cell in LSTM layers to learn policy that require memories of events over previous discrete time slots. These events are the blockage of mmW links due to the stochastic state transition function from states of the environment in the proposed POIPSG during last time slots. Moreover, the LSTM-based architecture allows us to avoid the problem of vanishing gradients in the training phase. Hence, LSTM-based architecture and compared to other DNNs provides a faster RL algorithm~\cite{Sutton2018}.
\begin{figure}[!t]
	\begin{center}
		\includegraphics[width=8.5cm]{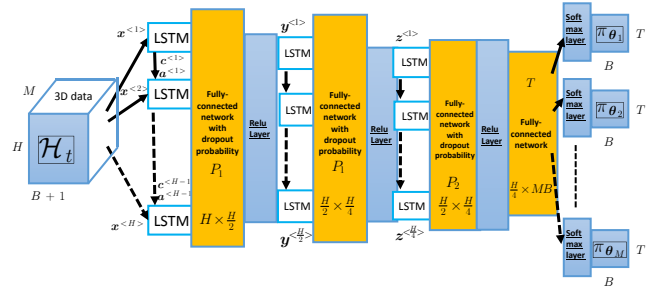}		\vspace{-0.2cm}
		\caption{ \small The deep RNN for implementing the centralized controller. Input is $\mathcal{H}_{t}$ and output is $\cup_{m \in \mathcal{M}}\pi_{\boldsymbol{\theta_m}}$. \vspace{-1cm} }
		\label{System_model_21}
	\end{center}
\end{figure}

2) \emph{Distributed controllers:}
in the highly dynamic mmW network, even during the policy signal transmissions over backhaul link from central controller to the mmW AP and RISs, the channel state may change. So, we have proposed a distributed control policy in which each of mmW AP or RISs will optimized their control policy in a distributed manner without requiring to send central policy over backhaul link. In the distributed controllers, the mmW AP and all the RISs act independently. In this case, since each agent acts independently, $\Pi_{\boldsymbol{\theta}}(T)=\prod_{m\in \mathcal{M}} \pi_{\boldsymbol{\theta}_m}$, and each deep RNN, which is in the controller of each agent $m$, implements the policy $\pi_{\boldsymbol{\theta}_m}$ because of the limited computational power. Although the mmW AP and RISs act independently, agents share their previous $H$ consecutive actions with other agents  using the synchronized coordinating links between themselves. The deep RNN that implements the distributed controller of each agent $m$ is shown in Fig.~\ref{System_model_22}. This deep RNN includes 2 LSTM, 3 fully connected, 2 Relu, and one Softmax layer. The two LSTM layers have layers of $H$ and $\frac{H}{4}$ memory cells.
\begin{figure}[!t]
	\begin{center}
		\includegraphics[width=8.5cm]{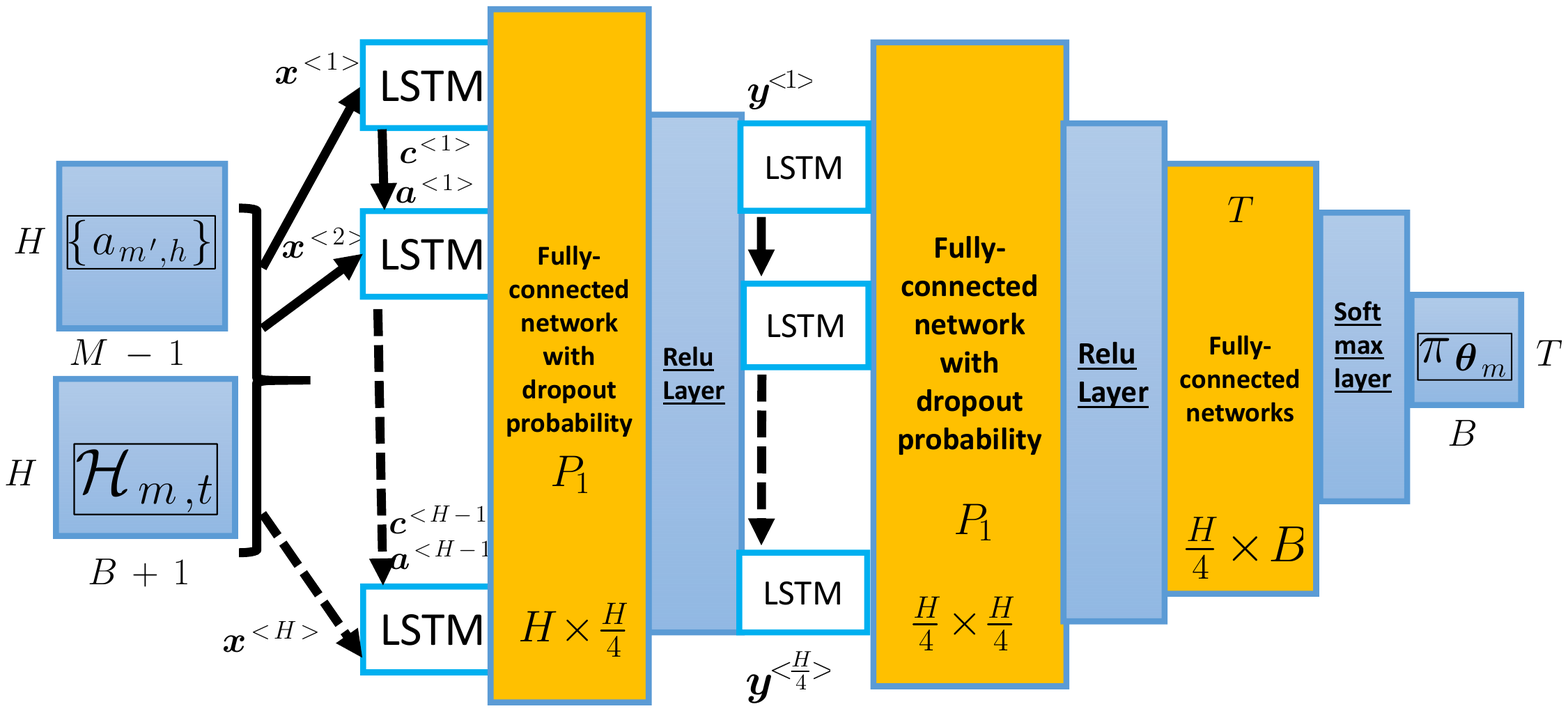}		\vspace{-0.4cm}
		\caption{ \small The deep RNN for implementing the distributed phase shift-controller. Input is $\mathcal{H}_{m,t}$ and output is $\pi_{\boldsymbol{\theta_m}}$.
\vspace{-1cm}}
		\label{System_model_22}
	\end{center}
\end{figure}

One of the techniques that can be used to prevent an NN from overfitting the training data is the so-called dropout technique~\cite{Goodfellow-2016}. We will find the value for dropout probabilities $P_1$ and $P_2$ for our proposed deep NN in Figs.~\ref{System_model_21} and~\ref{System_model_22} using trial-and-error procedure in the simulation Section. Since the payoff is identical for all agents and the observation of environment changes is from the same distribution for all agents, the gradient updating rules of the distributed and central controllers will be same in the considered POIPSG. This fact is shown as follow:
\begin{theorem}
\textnormal{Starting from the same point in the search space of policies for the proposed POIPSG and given the  identical payoff function, $J(\boldsymbol{\theta},t)$, the gradient update algorithm will converge to the same locally optimal parameter setting for the distributed controllers and centralized controller.}
\end{theorem}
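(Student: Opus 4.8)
The plan is to reduce the claim to a block-wise comparison of the two gradient updates and then close it by induction on the iteration index. The centralized controller performs the joint update $\boldsymbol{\theta} \leftarrow \boldsymbol{\theta} + \eta \nabla_{\boldsymbol{\theta}} J(\boldsymbol{\theta},t)$ on the stacked parameter vector $\boldsymbol{\theta}=\cup_{m\in\mathcal{M}}\boldsymbol{\theta}_m$, whereas each distributed agent $m$ performs $\boldsymbol{\theta}_m \leftarrow \boldsymbol{\theta}_m + \eta \nabla_{\boldsymbol{\theta}_m} J(\boldsymbol{\theta},t)$ on its own block, with a common step size $\eta$ and, by hypothesis, a common initialization. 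It therefore suffices to show that, whenever the two parameter vectors agree, the $m$-th block of the centralized gradient coincides with the gradient that distributed agent $m$ computes locally. If that holds, an induction over gradient-ascent steps forces the entire parameter trajectories to agree, so both schemes reach the identical locally optimal $\boldsymbol{\theta}^{*}$.

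First I would invoke Proposition 1 and exploit the fact that, since the agents act independently, $\log\Pi_{\boldsymbol{\theta}}(T)$ splits additively as $\sum_{t'=t}^{t+T-1}\sum_{m'\in\mathcal{M}}\log\pi_{\boldsymbol{\theta}_{m'}}(a_{m',t'}|\mathcal{H}_{m',t'})$ plus transition terms $\log\Pr\{r_{t'+1}|\boldsymbol{a}_{t'},\mathcal{H}_{m',t'}\}$ that carry no dependence on the policy parameters. Because the parameter blocks $\boldsymbol{\theta}_m$ are disjoint across agents, differentiating with respect to $\boldsymbol{\theta}_m$ annihilates every summand with $m'\neq m$ as well as the transition terms, leaving exactly $\nabla_{\boldsymbol{\theta}_m}\log\Pi_{\boldsymbol{\theta}}(T)=\sum_{t'=t}^{t+T-1}\nabla_{\boldsymbol{\theta}_m}\log\pi_{\boldsymbol{\theta}_m}(a_{m,t'}|\mathcal{H}_{m,t'})$. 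The remaining scalar factor $\big((1+\mu\mathbb{E}_{\Lambda_T}\{R_{T,t}\})R_{T,t}-\tfrac{\mu}{2}R_{T,t}^2\big)$ is built solely from the common episodic return $R_{T,t}$, which by the identical-payoff assumption is the same quantity for every agent. Hence the $m$-th block of $\nabla_{\boldsymbol{\theta}}J$ is formally identical to the local gradient $\nabla_{\boldsymbol{\theta}_m}J$ used by distributed agent $m$.

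Finally, I would argue that these two gradient expressions are not merely formally equal but numerically equal, i.e. that the expectation $\mathbb{E}_{\Lambda_T}\{\cdot\}$ is taken over the same trajectory law in both settings. This is where I expect the main obstacle to lie: the centralized head conditions each per-agent policy on the global history $\mathcal{H}_t$, whereas a distributed agent conditions only on its local history $\mathcal{H}_{m,t}$, so I must verify that the induced trajectory distribution $\Pi_{\boldsymbol{\theta}}(T)$ is genuinely the same object. I would close this gap using two facts already established in the setup: that the centralized controller is realized through $M$ independent Softmax heads, so its joint policy already factorizes as $\prod_{m}\pi_{\boldsymbol{\theta}_m}$ rather than representing an arbitrary joint distribution; and that the agents exchange their previous $H$ actions over the synchronized coordinating links, so the observation statistics entering each $\mathcal{H}_{m,t}$ come from the same distribution. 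With the trajectory law $\Pi_{\boldsymbol{\theta}}(T)$ and the return $R_{T,t}$ thereby identical in both formulations, the expectations agree term by term, the block gradients coincide numerically, and the induction goes through: starting from a common $\boldsymbol{\theta}$, both algorithms generate the same iterates at every step and converge to the same locally optimal parameter setting.
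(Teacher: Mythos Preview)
Your proposal is correct and follows essentially the same route as the paper's proof: both arguments exploit the disjointness of the parameter blocks to show $\nabla_{\boldsymbol{\theta}_m}\log\Pi_{\boldsymbol{\theta}}(T)=\sum_{t'}\nabla_{\boldsymbol{\theta}_m}\log\pi_{\boldsymbol{\theta}_m}(a_{m,t'}|\cdot)$, and then argue that the trajectory law (and hence the expectation) is identical under the two controllers because of the factorized policy structure and the synchronized coordinating links, yielding stepwise-equal updates. Your version is actually more explicit than the paper's in two respects---you frame the final step as an induction over gradient-ascent iterates, and you identify and address the $\mathcal{H}_t$ vs.\ $\mathcal{H}_{m,t}$ conditioning gap directly---but the underlying decomposition and the key lemma are the same.
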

\begin{proof}
See Appendix B.
\end{proof}

Following Theorem 1, if the architectures of the centralized controller in Fig.~\ref{System_model_21}, and distributed controllers in Fig.~\ref{System_model_22} are designed correctly and the proposed deep RNNs are trained with enough data, the performance of distributed controllers should approach that of the centralized controller in the RIS-assisted mmW networks. In this case, instead of using a central server in the RIS-assisted mmW networks with highly computational cost and signaling overhead to send the control policies to all agents across all network, one can use the distributed coordinated controllers with low computational power. Moreover, for an indoor environment with a large number of dark areas, more RISs are required, basically one RIS per dark area. Thus, compare to centralized solution, a distributed control policy is required to guarantee scalability of our proposed solution for the environment with high number of RISs. In this case, the distributed controllers just need to share the policies with the agents that cooperate to cover the same dark area. Thus, the signaling overhead is also limited to the local area in the distributed controller setting. In addition to these, the policy profile under the distributed controllers is a Nash equilibrium of the POIPSG. We state this more precisely in the following.
\begin{theorem}
\textnormal{At the convergence of the gradient update algorithm in (\ref{gradient_equation}), the policy profile under the distributed controllers is an NE equilibrium of the POIPSG.}
\end{theorem}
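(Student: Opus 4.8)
The plan is to translate the first-order stationarity certified by convergence of the gradient-ascent update into the game-theoretic no-unilateral-deviation condition that defines a Nash equilibrium (NE) for the distributed policy profile. First I would fix notation, writing the converged parameters as $\boldsymbol{\theta}^\star=\cup_{m\in\mathcal{M}}\boldsymbol{\theta}_m^\star$ and letting $\boldsymbol{\theta}_{-m}^\star$ denote the parameters of all agents other than $m$. I would recall that, in the parametric policy space of the POIPSG, the profile $\boldsymbol{\theta}^\star$ is a (local) NE precisely when, for every agent $m\in\mathcal{M}$, no unilateral deviation $\boldsymbol{\theta}_m$ in a neighborhood of $\boldsymbol{\theta}_m^\star$ increases that agent's payoff, i.e. $J(\boldsymbol{\theta}_m^\star,\boldsymbol{\theta}_{-m}^\star,t)\ge J(\boldsymbol{\theta}_m,\boldsymbol{\theta}_{-m}^\star,t)$. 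The structural fact I would lean on throughout is that the game is identical-payoff: every agent's best-response objective is the \emph{same} return $J(\boldsymbol{\theta},t)$.

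Next I would invoke the distributed-controller gradient decomposition from Proposition 1, in which $\nabla_{\boldsymbol{\theta}}\log\Pi_{\boldsymbol{\theta}}(T)$ factorizes as $\sum_{t'}\sum_{m}\nabla_{\boldsymbol{\theta}_m}\log\pi_{\boldsymbol{\theta}_m}(a_{m,t'}|\mathcal{H}_{m,t'})$. Substituting this into (\ref{gradient_equation}) shows that the joint gradient $\nabla_{\boldsymbol{\theta}}J$ is exactly the concatenation of the per-agent block gradients $\nabla_{\boldsymbol{\theta}_m}J$. Hence, at a fixed point of the update where $\nabla_{\boldsymbol{\theta}}J(\boldsymbol{\theta}^\star,t)=\boldsymbol{0}$, each block must vanish separately, giving $\nabla_{\boldsymbol{\theta}_m}J(\boldsymbol{\theta}^\star,t)=\boldsymbol{0}$ for all $m$. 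I would read this as the first-order optimality condition of each agent's best-response problem $\max_{\boldsymbol{\theta}_m}J(\boldsymbol{\theta}_m,\boldsymbol{\theta}_{-m}^\star,t)$ evaluated at $\boldsymbol{\theta}^\star$.

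Finally I would promote this stationarity to a genuine no-deviation inequality. Since ascent on the common return converges to a local maximum of $J$ over the joint parameter space rather than to a saddle point, $\boldsymbol{\theta}^\star$ is in particular a local maximum of $J$ along each coordinate block $\boldsymbol{\theta}_m$ with $\boldsymbol{\theta}_{-m}^\star$ held fixed; this yields the best-response inequality above for every $m$, which is the definition of a local NE. I expect the main obstacle to be exactly this promotion step: a null gradient only certifies a stationary point, so the argument must exclude saddle points and minima and transfer joint local-maximality to per-block local-maximality. This is where the cooperative, identical-payoff assumption is essential --- because all agents ascend the same objective surface, a joint local maximum is automatically a best response for each agent simultaneously, so the converged distributed profile cannot be unilaterally improved and is therefore an NE of the POIPSG.
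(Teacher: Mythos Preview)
Your proposal is correct and follows essentially the same route as the paper: both arguments exploit the identical-payoff structure together with the factorization $\nabla_{\boldsymbol{\theta}}\log\Pi_{\boldsymbol{\theta}}(T)=\sum_{t'}\sum_{m}\nabla_{\boldsymbol{\theta}_m}\log\pi_{\boldsymbol{\theta}_m}$ to conclude that, at convergence, each agent's block gradient vanishes and hence each agent's best-response problem (which has the same objective $J$) is solved at $\boldsymbol{\theta}_m^\star$. The paper frames this operationally---running agent $m$'s best-response ascent with $\boldsymbol{\theta}_{-m}^\star$ fixed reproduces the same update rule and therefore stays at $\boldsymbol{\theta}_m^\star$---while you frame it analytically via block-wise first-order conditions, and you are somewhat more explicit than the paper in flagging the saddle-point caveat; but the core idea and logical structure coincide.
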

\begin{proof}
See Appendix C.
\end{proof}

Consider a training set $\mathcal{S}$ of $S$ samples that is available to train the deep RNN network. Each training sample $s$ includes a sample of policies and bitrates during $H$-consecutive time slots before time slot $t_s$, $\{\pi_{\boldsymbol{\theta}_m}^{(s)}(a_{t'}|r_{t'}),r_{t'}^{(s)}|t'=t_s-h+1,...,t_s, \forall m\in \mathcal{M} \}$, and policies and bitrates during future $T$-consecutive time slots after time slot $t_s$,
$\{\pi_{\boldsymbol{\theta}_m}^{(s)}(a_{t'}|r_{t'}),a_{t'}^{(s)},r_{t'}^{(s)}|t'=t_s+1,...,t_{s}+T, \forall m\in \mathcal{M}\}$. Consequently, based on Proposition 1 and by replacing the expectation with sample-based estimator for $\nabla_{\boldsymbol{\theta}}J(\boldsymbol{\theta})$, we use the gradient-ascend algorithm to train the RNN  as follows:
\begin{align}
& \nabla_{\boldsymbol{\theta}}J(\boldsymbol{\theta})\approx \frac{1}{S}\sum_{s=1}^S \Big(\nabla_{\boldsymbol{\theta}}\log\Pi_{\boldsymbol{\theta}}^{(s)}(T) \times \big((1-\mu R_S) R_{T,t_s}+\nonumber \\ &\frac{\mu}{2}R_{T,t_s}^2\big)\Big),
\boldsymbol{\theta}\leftarrow \boldsymbol{\theta}+\alpha \nabla_{\boldsymbol{\theta}}J(\boldsymbol{\theta}), \label{Learning-algorithm}
\end{align}
where $R_{T,t_s}=\sum_{t'=t_s+1}^{t_s+T} r_{t'}^{(s)}$, and $R_S=\frac{1}{S}\sum_{s=1}^S R_{T,t_s}$. Here, $\alpha$ is the learning rate. In summary, to solve the optimization problem in (\ref{Opt_prob}), we model the problem using deep and risk-sensitive RL framework as the problem (\ref{Opt_prob_2}). Then, to solve the problem (\ref{Opt_prob_2}), we implement two centralized and distributed policies using deep RNNs which are shown Figs.~\ref{System_model_21} and~\ref{System_model_22}. Then, based on gradient ascent algorithm, we use (\ref{Learning-algorithm}) to iteratively train the proposed deep RNNs and optimize $\boldsymbol{\theta_m},\forall m \in \mathcal{M}$. Algorithm~\ref{algorithm-E}  presents the deep RNN-based RL approach of our proposed joint mmW beamforming and RIS phase shift changing control policy. We should note that in addition to policy gradient approach, other on policy RL learning algorithms such as proximal policy optimization (PPO) can be applied in our proposed framework. Indeed, PPO approach will lead to find the stable policy with lower variance in the process of policy search. However, compare to policy gradient approach, the PPO also need more iteration to achieve convergence. In practice, the proposed deep RNNs in Figs.~\ref{System_model_21} and~\ref{System_model_22} can be run directly on the FPGA fabric of a software-defined radio (SDR) platform such as $\text{DeepRadio}^{\text{TM}}$~\cite{9082619}.
\vspace{-0.2cm}

\begin{algorithm}[!t]
\caption{\small{Intelligent beamforming and phase-shift control policy}} \label{algorithm-E}
\small{
\begin{algorithmic}[1]
\State \textbf{Input:} Set of mmW AP and RISs: $\mathcal{M}$; initial training set $\mathcal{S}=\{\mathcal{H}^{(s)}_{m,t_s},r_{t'}^{(s)}|t'=t_s+1,...,t_s+T, \forall m\in \mathcal{M}\}$ of $S$ samples of histories and bitrates;
\State \textbf{Phase I - Network Operator}
\State Tune risk sensitivity parameter, $\mu$, to  maximize the expected bitrate and mitigate the risk of mmW link blockage;
\State Define a deep RNN-based control policy mode,$\{\pi_{\boldsymbol{\theta}_m}|,\forall m\in \mathcal{M}\}$, e.g., distributed or centralized, shown in Figs. 2 or 3 of the revised manuscript, respectively.
\State \textbf{Phase II - Offline training}
\State Train the deep RNN-based control policy, $\{\pi_{\boldsymbol{\theta}_m}|,\forall m\in \mathcal{M}\}$ ,using initial training set following the gradient-ascend algorithm in (14) of the revised manuscript, $\boldsymbol{\theta}\leftarrow \boldsymbol{\theta}+\alpha \nabla_{\boldsymbol{\theta}}J(\boldsymbol{\theta})$ , where $\nabla_{\boldsymbol{\theta}}J(\boldsymbol{\theta})\approx \frac{1}{S}\sum_{s=1}^S \Big(\nabla_{\boldsymbol{\theta}}\log\Pi_{\boldsymbol{\theta}}^{(s)}(T) \times \big((1-\mu R_S) R_{T,t_s}+\frac{\mu}{2}R_{T,t_s}^2\big)\Big)$,
\State \textbf{Phase III - Online deep reinforcement learning}
\Repeat
\parState {Observe the global history, $\mathcal{H}_t$, at time slot $t$;}
\For{each agent $m\in \mathcal{M}$}
\parState{if $m=0$, mmW AP forms transmission beam using $\pi_{\boldsymbol{\theta}_0}$ policy;}
\parState{if $m\neq0$, RIS $m$ shifts the phase of received signals using $\pi_{\boldsymbol{\theta}_m}$ policy;}
\EndFor
\State during future $T$-consecutive time slots, perform control policies, $\{\pi_{\boldsymbol{\theta}_m}|,\forall m\in \mathcal{M}\}$, and capture the received bitrate;
\State Update the training set $\mathcal{S}=\mathcal{S} \cup \{\mathcal{H}_{m,t},r_{t'}|t'=t+1,...,t+T, \forall m\in \mathcal{M}\}$;
\State Uniformly select a set of $S_{b}$ samples from updated training set $S$ as a minibatch set
\State Update he deep RNN-based control policy following the gradient-ascend algorithm in (14) of the revised manuscript, $\boldsymbol{\theta}\leftarrow \boldsymbol{\theta}+\alpha \nabla_{\boldsymbol{\theta}}J(\boldsymbol{\theta})$ , where $\nabla_{\boldsymbol{\theta}}J(\boldsymbol{\theta})\approx \frac{1}{S}\sum_{s=1}^S \Big(\nabla_{\boldsymbol{\theta}}\log\Pi_{\boldsymbol{\theta}}^{(s)}(T) \times \big((1-\mu R_S) R_{T,t_s}+\frac{\mu}{2}R_{T,t_s}^2\big)\Big)$,
\State $t=t+T$;
\Until {$t=t_{\text{end}}$ or convergence happens}
\State \textbf{Phase IV - Stable control policy}
\State \textbf{Ouput:} the stable beamforming and phase-shift control policy profile, $\{\pi_{\boldsymbol{\theta}^*_m}|,\forall m\in \mathcal{M}\}$, that is a Nash equilibrium of the POIPSG under distributed controllers or sub-optimal solution for problem (5) in the revised manuscript under centralized controllers.
\end{algorithmic}
}
\end{algorithm}

\subsection{Complexity of deep RNN-based policies}
The complexity of an NN depends on the number of hidden layers, training examples, features, and nodes at each layer~\cite{complexity}. The complexity for training a neural network that has $L$ layers and $n_l$ node in layer $l$ is given by $\mathcal{O}(nt\prod_{l=1}^{L-1} n_ln_{(l+1)})$ with $t$ training examples and $n$ epoch. Meanwhile, the complexity for one feedforward propagation will be $\mathcal{O}(\prod_{l=1}^{L-1} n_ln_{(l+1)})$. On the other hand, LSTM is local in space and time, which means that the input length does not affect the storage requirements of the network~\cite{complexity3}. In practice, after training the RNN-based policy, our proposed solution will use the feed-forward propagation algorithm to find the solution. In this case, following the proposed RNN architectures in Figs.~\ref{System_model_21} and~\ref{System_model_22}, the complexities of the centralized and distributed controllers are $\mathcal{O}(H(H+MBT))$ and $\mathcal{O}(H(H+MB+BT))$, respectively. These complexities are polynomial functions of key parameters such as history length, $H$, number of mmW AP and RISs, $M$, phase shift angles, $B$, and future time slots, $T$. On the other hand the complexity of optimal solution suing brute force algorithm is $\mathcal{O}(MBT^2+MBTH)$. Consequently, for a given history length $H$, the optimal solution has the highest complexity, $\mathcal{O}(MBT^2)$, while the complexity of our proposed distributed solution, $\mathcal{O}((MB+BT))$, is the least complex.

\section{Simulation Results and Analysis}\label{Sec:Simulation}
For our simulations, the carrier frequency is set to 73 GHz and the mmW bandwidth is 1 GHz. In this case, the value of the wavelength lambda of the carrier frequency is $\lambda=\frac{c}{f}=\frac{3\times 10^8}{73 \times 10^9}\simeq 4$ mm. The number of transmit antennas at the mmW AP and receive antennas at the UE are set to 128 and 64, respectively. The duration of each time slot is 1 millisecond which is consistent the mmW channel coherence time in typical indoor environments~\cite{sevenn}. The transmission power of the mmW AP is 46 dBm and the power density of noise is -88 dBm.  We assume that the mmW RIS assigns a square of $8\times8=64$ meta-surfaces to reflect the mmW signals. Each meta-surface shifts the phase of the mmW signals with a step of $\frac{\pi}{5}$ radians from the range $[-\frac{\pi}{2},\frac{\pi}{2}]$. In our simulation, we assume that one mmW AP and two mmW RISs are mounted on the walls of the room and controlled using our proposed framework to guarantee reliable transmission. To evaluated our proposed RNN-based control policies, we use two real-world and model-based datasets of the users' trajectories in an indoor environment. To generate model-based dataset, we consider a 35-sq. meter office environment with a static wall blockage at the center. In this regard, we have assumed a given probability distribution for the users' location in a room.  This location probability distribution can be calculated using well-known indoor localization techniques such as the one in~\cite{three}. For generating the data set of mobile users' trajectories, we use a modified random walk model. In this case, the direction of each user's next step is chosen based on the probability of user's presence at next step location. Fig.~\ref{Simulation_model} shows the probability distribution of the user's locations in the office, the location of the mmW RIS, and an illustrative example of a user trajectory. We further evaluate our proposed solution using real-world dataset. We use the OMNI1 dataset~\cite{dataset}. This dataset includes trajectories of humans walking through a lab captured using an omni-directional camera. Natural trajectories collected over 24 hours on a single Saturday. This dataset contains 1600 trajectories during 56 time slots. For comparison purposes, we consider the optimal solution, as a benchmark in which the exact user's locations and optimal strategies for the reflector during the next future $T$-time slots are known.
\vspace{-0.1cm}
\begin{figure}[!t]
	\begin{center}
		\includegraphics[width=8cm]{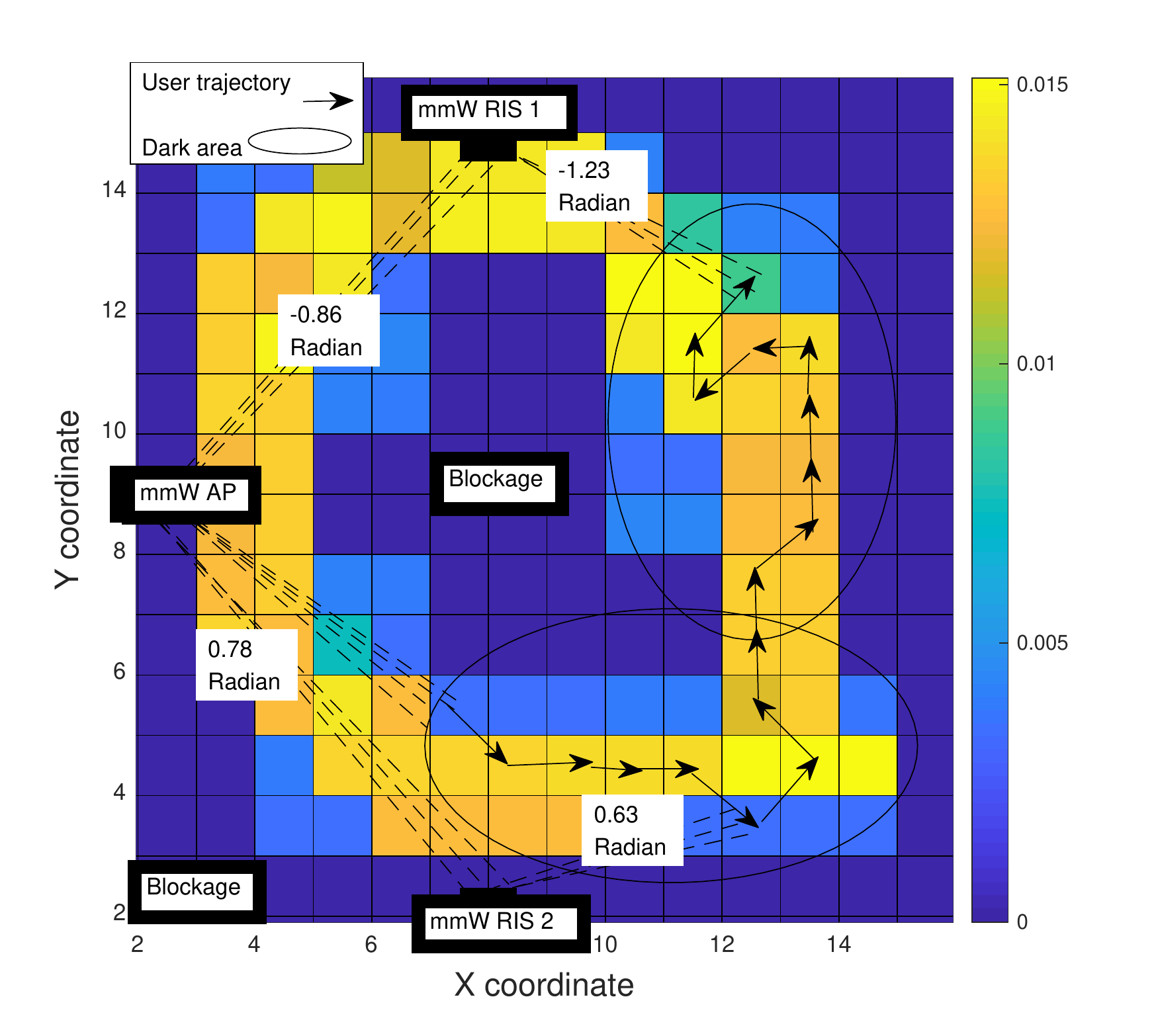}	
	    \vspace{-0.1cm}
		\caption{ \small The distribution probability of mobile user's location.}
        \vspace{-1cm}
		\label{Simulation_model}
	\end{center}
\end{figure}

\subsection{Performance evaluation of deep RNN training}
To evaluate the performance of the proposed controllers implemented with deep RNN depicted in Figs.~\ref{System_model_21} and~\ref{System_model_22}, Fig.~\ref{RMSE} shows the RMSE between the predicted and optimal policies of the centralized and distributed controllers when dropout probabilities are $P_{1}=0.2$ and $P_{2}=0.4$. On average the difference between RMSEs over the training and validation sets is less than $1\%$ which shows that the deep RNN model is not over-fitted to the training data set. In addition, on average the difference between RMSEs over training and test sets is less than $0.7\%$ which shows that implemented deep RNN model is not under-fitted and the deep RNN model can adequately capture the underlying structure of the new dynamic changes over mmW links. Thus, the structure of proposed deep RNN models depicted in~\ref{System_model_21} and~\ref{System_model_22} are correctly chosen and the hyper-parameters such as dropout probabilities $P_{1}=0.2$ and $P_{2}=0.4$ in the training phase are tuned correctly. On average, the RMSE for future consecutive time slots is $5.5\%$ for $T=2$ and $11.5\%$ for $T=4$. This show that predicting the correct control strategy becomes harder when the window length of future consecutive time slots increases, but even for $T=4$ the deep RNN can capture the unknown future dynamics over mmW links and correctly predict control strategy in $88.5\%$ of times. Beside these, the differences of RMSEs between centralized and distributed controllers are $0.3\%$, $1\%$, and $0.9\%$ over training, validation, and test sets. This shows that the performance of centralized and distributed controllers are almost as same as each others.
\begin{figure}[t!]
  \centering
  \includegraphics[width=7cm]{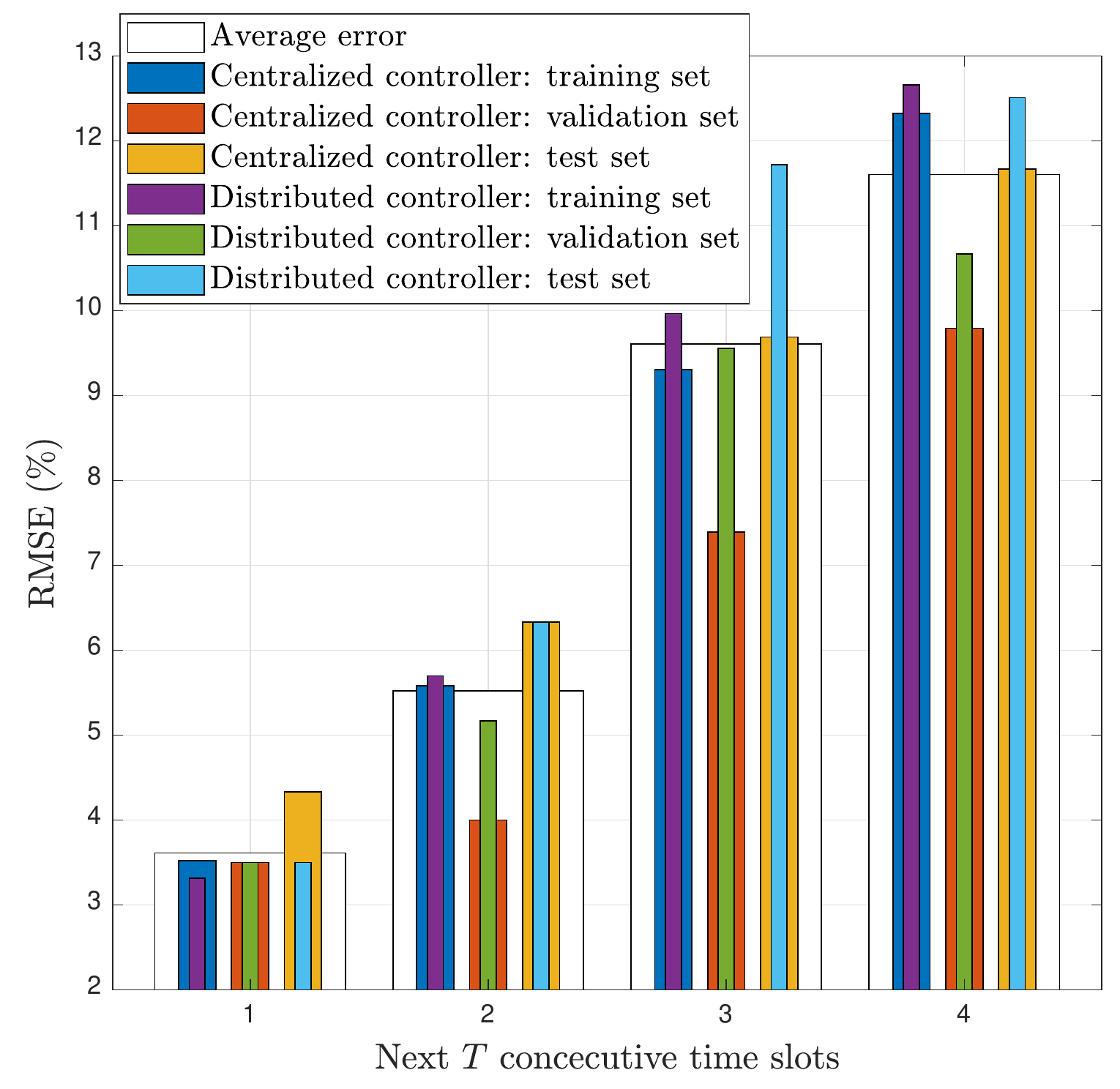}
  \vspace{-0.1cm}
  \caption{RMSE for the parametric functional-form policy.}
  \vspace{-0.5cm}
  \label{RMSE}
\end{figure}

\subsection{Achievable rate under proposed RNN-based controllers}
In Fig.~\ref{Rate_Time}, we show the achievable rate, $R_T$, following the centralized and distributed controller policies over time for model-based dataset presented in the simulation setup. As we can see from Figs.~\ref{Rate_Time_Dist} and~\ref{Rate_Time_Cent}, when the risk sensitivity parameter is set to zero, called i.e., non-risk scenario, a higher rate with highly dynamic changes is achieved under the optimal solution. However, when the risk sensitivity parameter increases from $0$ to $0.8$, i.e., risk-based scenario, the policy resulting from the centralized and distributed controllers achieves less average rate with lower variance which is more reliable. For model-based datset, on average, the mean and variance of the achievable rate for the non-risk scenario are $28\%$ and $60\%$ higher than the risk-based scenarios for different future time slot lengths, respectively. Moreover, we can also see that, controlling during wider time window of future consecutive time slots leads to more reliable achievable rate but with lower average rate for the risk-based scenario. For example, when $T=2$, the mean and variance of the achievable rate are $7.27$ and $0.053$ respectively, but the mean and variance of achievable rate respectively decrease to $3.92$ and $0.0018$ when $T=4$. The reason is that controlling the beam angle of mmW AP and phase shift of RISs for larger window of future time slots gives the centralized and distributed controllers more available strategy to decrease the variance more compare to controlling the beam angle and phase shift during tighter window of future time slots. In addition to this, on average, the mean of the rate achieved by the distributed controller is $4.5\%$ higher than the centralized controller and the difference in the variance of the achieved rate between the centralized and distributed controllers is $2\%$. This result shows that the performance of the centralized and distributed controllers is identical.
\begin{figure}[ht]
\begin{subfigure}{.5\textwidth}
  \centering
  \includegraphics[width=7cm]{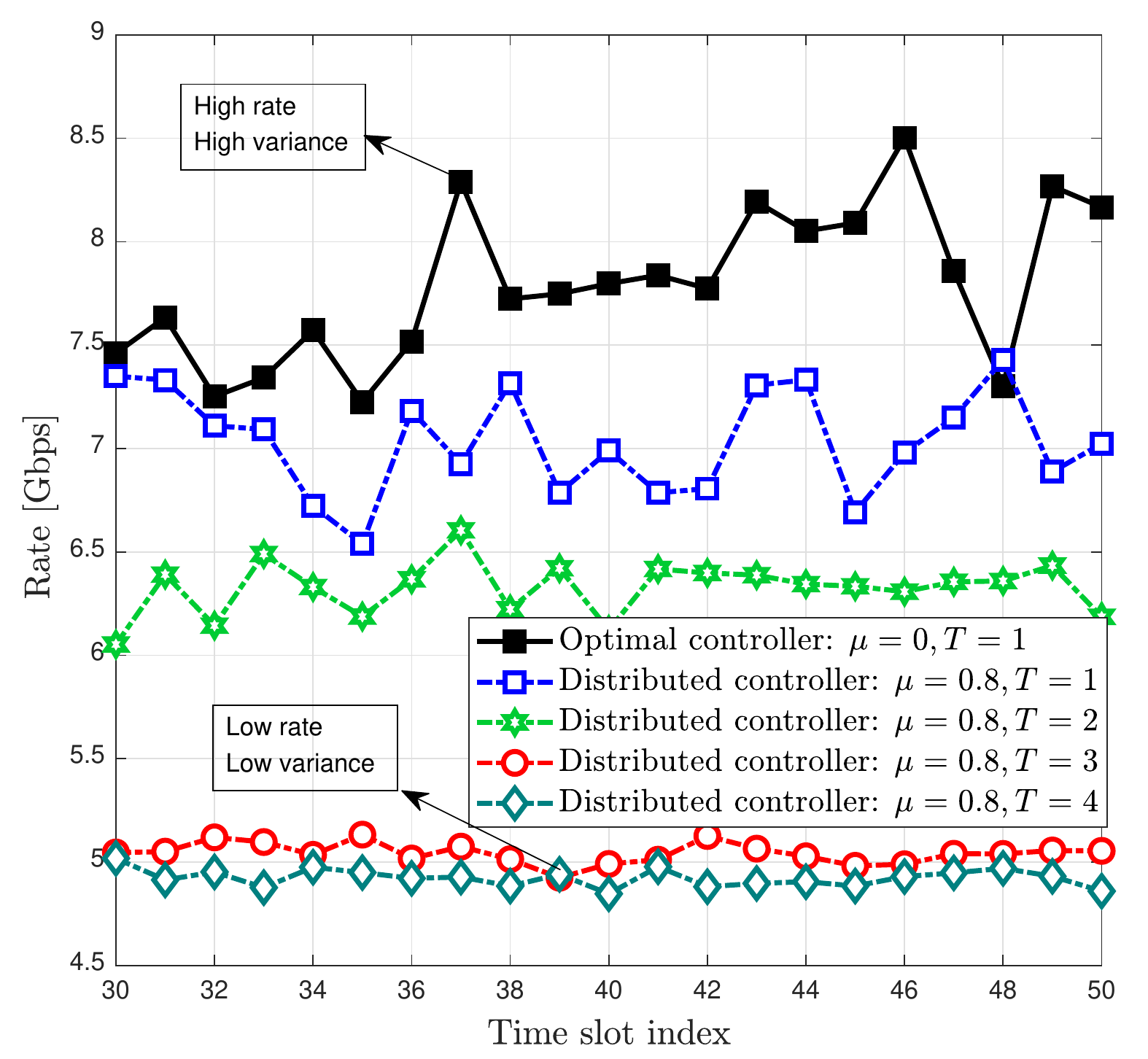}
  \caption{Distributed controllers.}
  \label{Rate_Time_Dist}
\end{subfigure}
\begin{subfigure}{.5\textwidth}
  \centering
 \includegraphics[width=7cm]{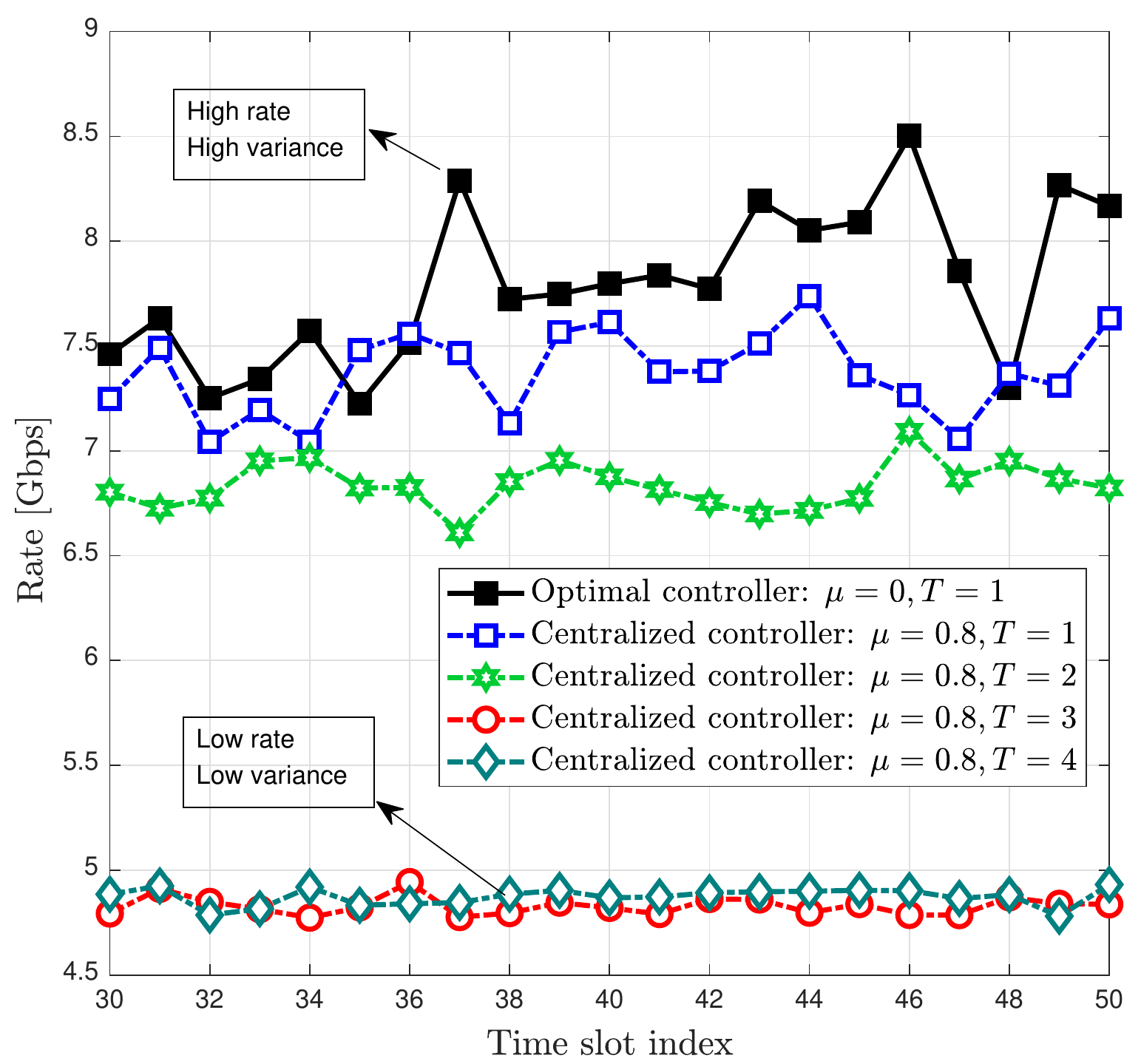}
  \caption{Centralized controller.}
  \label{Rate_Time_Cent}
\end{subfigure}
\caption{Achievable rate, $R_T$, for model-based dataset.}\label{Rate_Time}
\end{figure}

In Fig.~\ref{Rate_Time_Lab}, we show the achievable rate, $R_T$, following the centralized and distributed controller policies for real-world dataset in~\cite{dataset}. From Figs.~\ref{Rate_Time_Lab_scenario_Dist} and~\ref{Rate_TimeLab_scenario_Central}, we observe that, in a non-risk scenario, $\mu=0$, a high rate with high variance is achieved under the optimal solution. However, in a risk-based scenario, $\mu=0.8$, the policy resulting from the centralized and distributed controllers achieves a smaller data rate but with lower variance which is more reliable. For real-world dataset, on average, the mean and variance of the achievable rate for the non-risk scenario are $17\%$ and $34\%$ higher than the risk-based scenarios for different future time slot lengths, respectively. Moreover, when $T=2$, the mean and variance of the achievable rate are $4.45$ and $0.0066$ respectively, but the mean and variance of the achievable rate respectively decrease to $3.31$ and $0.0023$ when $T=4$. In addition to this, on average, the differences in the variance and the mean of the rate achieved by the centralized and distributed controllers  are $6\%$ and $0.8\%$. This result shows that the performance of the centralized controller is near the distributed controller performance for real-world dataset.

\begin{figure}[ht]
\begin{subfigure}{.5\textwidth}
  \centering
  \includegraphics[width=7cm]{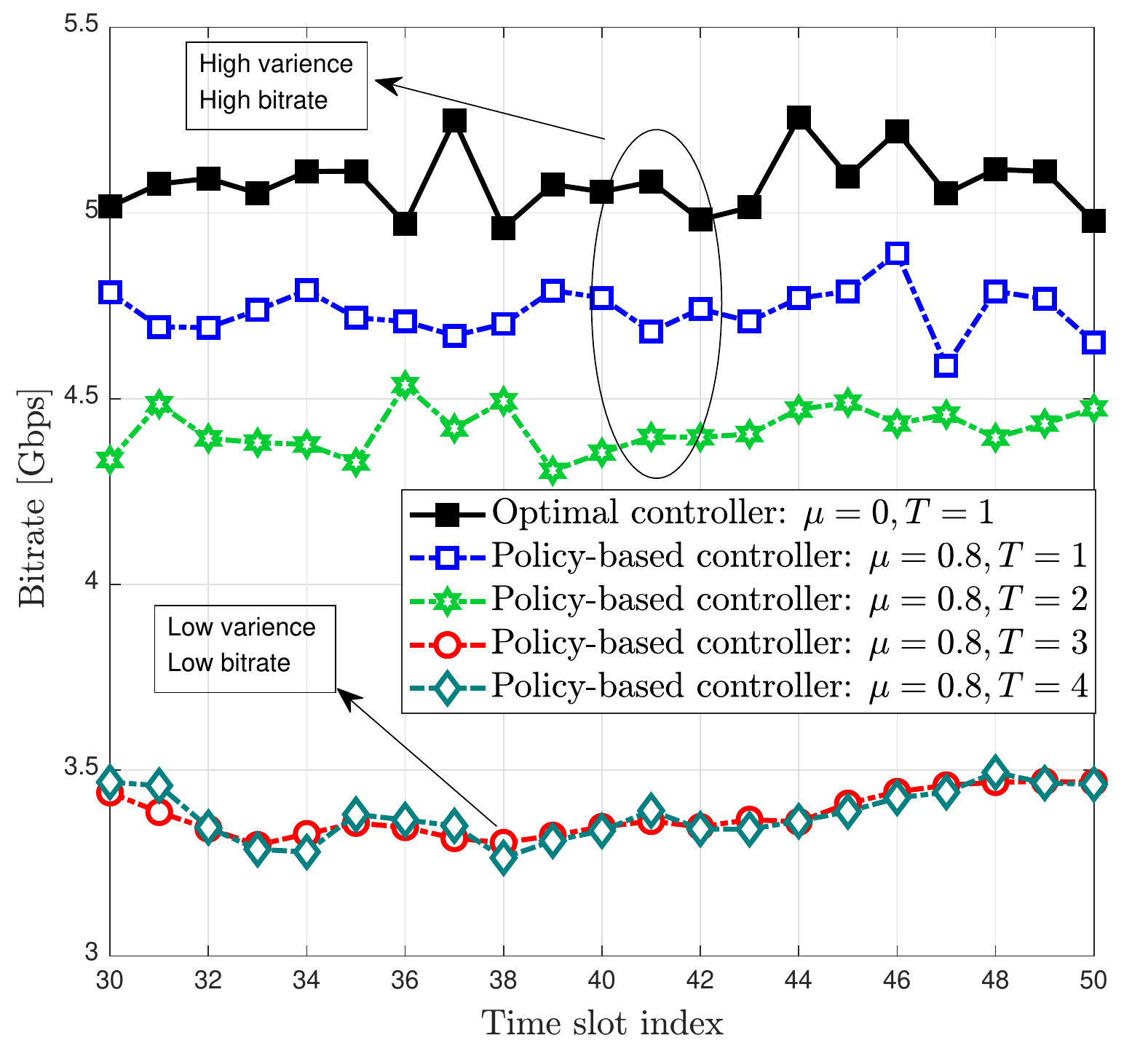}
  \caption{Distributed controllers.}
  \label{Rate_Time_Lab_scenario_Dist}
\end{subfigure}
\begin{subfigure}{.5\textwidth}
  \centering
 \includegraphics[width=7cm]{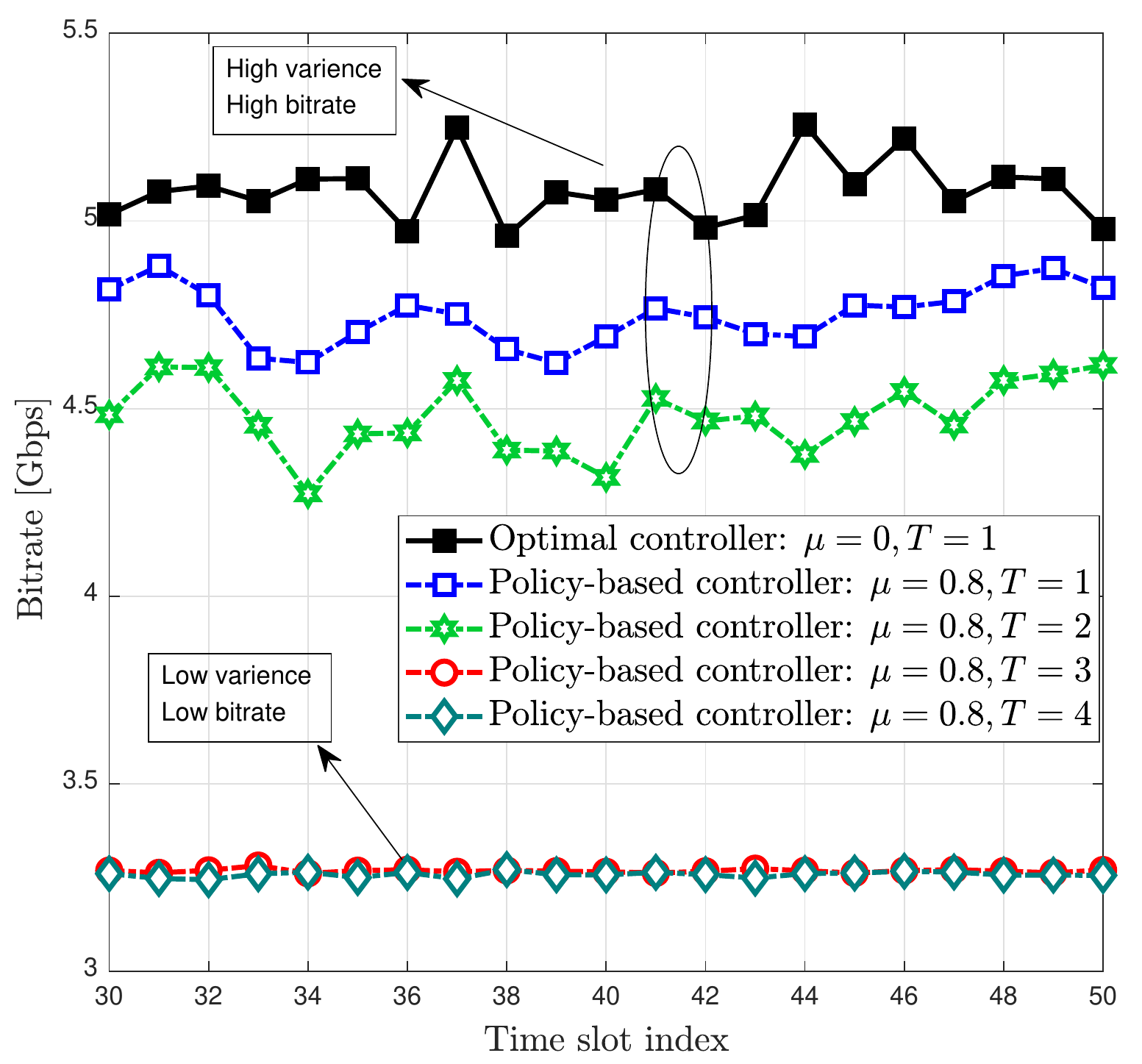}
  \caption{Centralized controller.}
  \label{Rate_TimeLab_scenario_Central}
\end{subfigure}
\caption{Achievable rate, $R_T$, for real-world dataset.}\label{Rate_Time_Lab}
\vspace{-0.1cm}
\end{figure}

In Fig.~\ref{Risk_effect}, we show the impact of the risk sensitivity parameter on the reliability of achievable rate. Indeed, in Fig.~\ref{Risk_effect}, we show the variance of received rate versus different values of the risk sensitivity parameter $\mu$ resulting from our proposed distributed RNN-based policy for real-world dataset in~\cite{dataset} and model-based dataset presented in the simulation setup. As we can see from this figure, a larger risk sensitivity parameter leads to less variance in the data rate. When we change $\mu$ from $0$ to $0.8$, the rate variance, on average, reduces $86\%$ and $54\%$ for the real-world and the model-based dataset, respectively. Moreover, the variance performance in the model-based dataset is higher than the real-world dataset because the users’ mobility trajectory in the model-based dataset is smoother than the real-world dataset.
\begin{figure}[t!]
  \centering
  \includegraphics[width=7cm]{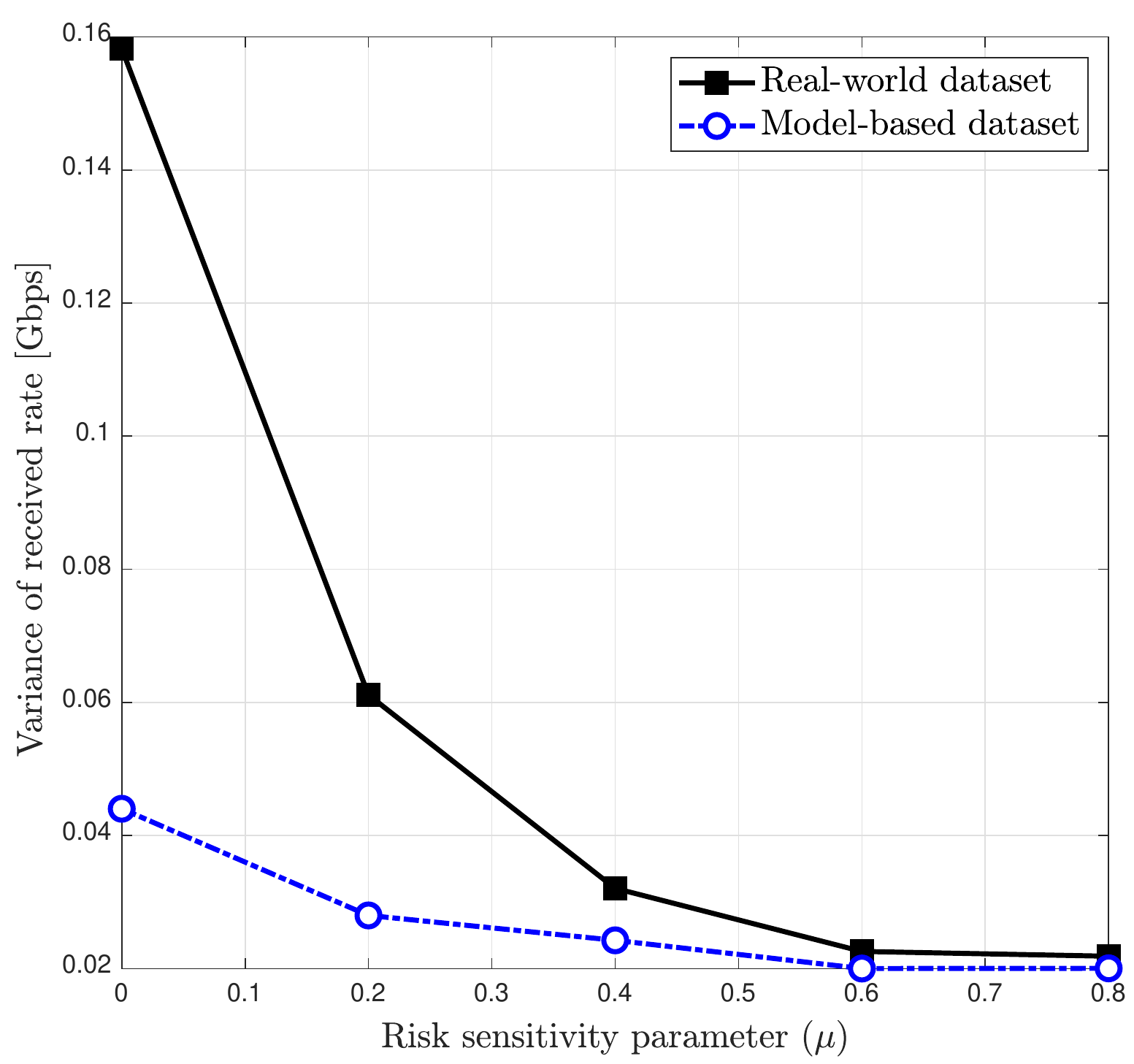}
  \caption{Impact of the risk sensitivity parameter on the achievable rate.} \vspace{-0.4cm}
  \label{Risk_effect}
  \vspace{-0.1cm}
\end{figure}

\subsection{Robustness and complexity RNN-based controllers}
Fig.~\ref{Policy} shows the average policies resulting from the centralized and distributed controllers, $\pi_{\boldsymbol{\theta}}$, and optimal joint beamforming and phase shift-controller for the mmW AP and RISs over different future consecutive time slots for the risk-sensitive approach when $\mu=0.8$. From Fig.~\ref{Policy}, the error between policies of distributed controllers and optimal solution are $1.2\%$, $2.5\%$, and $0.8\%$, for mmW AP, and RIS 1, and RIS 2 on average. This is due to the fact that during the time slots, the deep RNN, which has enough memory cell, can capture the previous dynamics over mmW links and predict the future mobile user's trajectory in a given indoor scenario. Thus, the policies from proposed phase shift-controller based on deep RNN is near the optimal solution. From Fig.~\ref{Policy}, shows that the controller steers the AP beam toward mmW RIS 1 with $-0.82$ radian and mmW RIS 2 with $-0.78$ radian with probability $0.3$ and $0.12$, respectively. Moreover, the controller of RIS 1 reflects the mmW signal from $-1.4$ to $-0.5$ radians most of the times and also the controller of RIS 2 shifts the phase of the mmW signal to cover from $-0.47$ to $-0.78$ radians with higher probability. Following the locations of mmW AP and RISs in the simulation scenario depicted in Fig.~\ref{Simulation_model}, these results are reasonable because they show that the distributed controller implemented with deep RNN coordinate the beam angle of mmW AP and phase shift-controller of RISs to cover the dark areas with high probability.
\begin{figure}[t!]
  \centering
  \includegraphics[width=7cm]{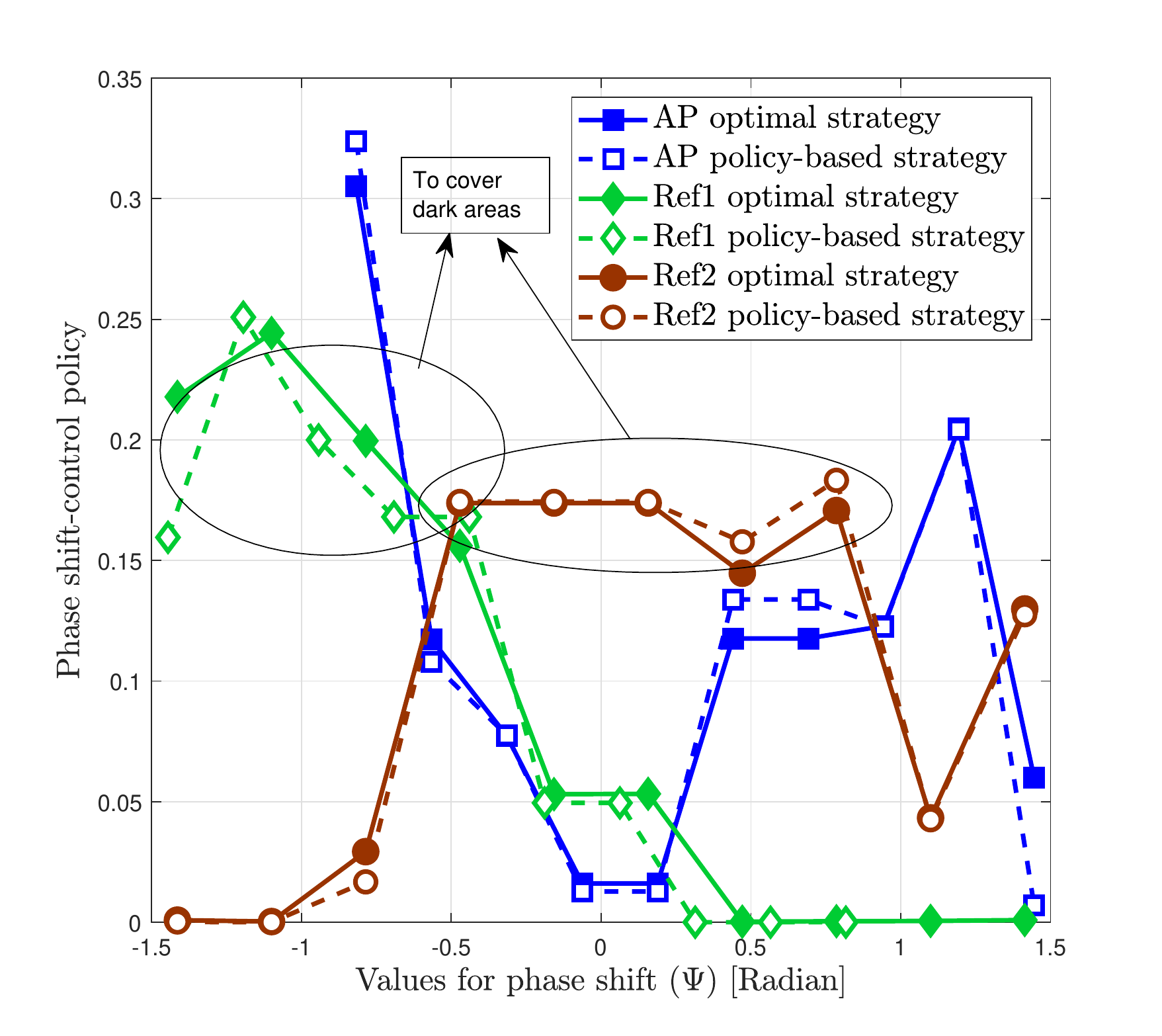}
  \caption{Optimal and policy-based strategies of joint beamforming and phase shift-controllers.} \vspace{-0.4cm}
  \label{Policy}
\end{figure}

In Fig.~\ref{Optimal_Gap}, we show, the gap between the suboptimal and optimal solutions. As we can see, the gap between the RNN-based and optimal policies for the real-world dataset is slightly different from the model-based datasets. On average, the gaps between the RNN-based and optimal policies of mmW AP and RISs are $1.7\%$ and $1.3\%$ for the real-world and the model-based dataset, respectively. Consequently, it is clear that our proposed RNN-based solution is near optimal.
\begin{figure}[t!]
  \centering
  \includegraphics[width=7cm]{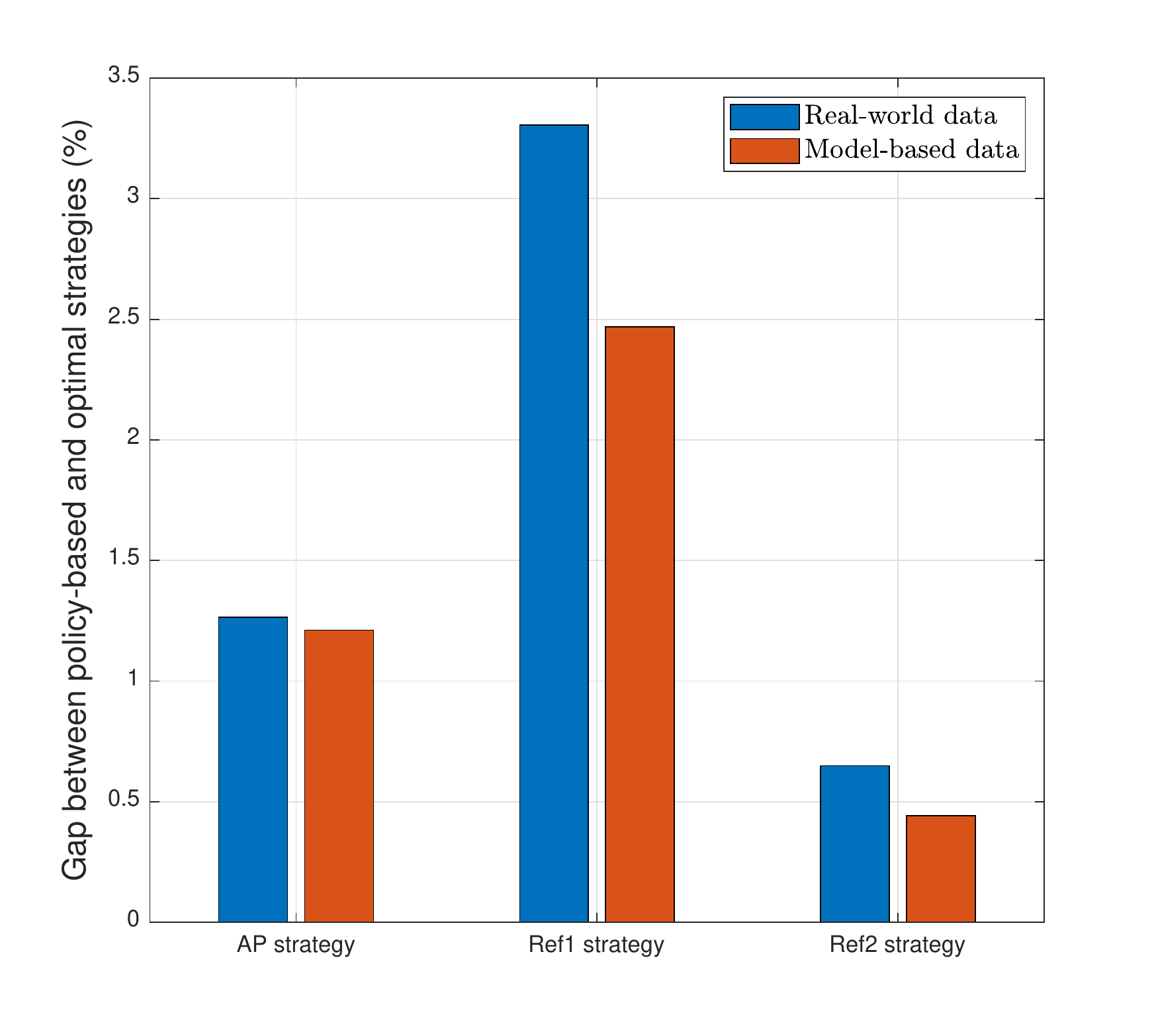}
  \caption{Gap between the RNN-based and optimal policies.}
  \vspace{-0.9cm}
  \label{Optimal_Gap}
\end{figure}

To show the robustness of our proposed scheme, we have changed the mobility pattern of the users by adding some random obstacles in the room while we use an RNN-based policy that was previously trained on a scenario without additional obstacles. This scenario allows us to evaluate the robustness of our solution with respect to new unknown random changes in the mobility pattern of users and blockages over mmW channel that were not considered in the training dataset. For this simulation, we have randomly added obstacles with size of $3\times3$ in a 35-sq. meter office environment. All the results are averaged over a large number of independent simulation runs. To evaluate the robustness of our proposed RNN-based policy, in Fig.~\ref{robustness}, we show the percentage of deviation in the data rate achieved in the new environment with respect to the scenario without additional obstacles. From Fig.~\ref{robustness}, we can see that the percentage of rate deviation increases when we add more obstacle in the room. However, when the controller predicts the policies for the next two slots, the  deviation percentage is less than $15\%$. This means our proposed control policy is more than $85\%$ robust with respect to the new environmental changes in the room. Moreover, when the RNN-based controller predicts control policy during 3 or 4 future time slots in a new environment, the robustness of our proposed RNN-based controller decreases. Hence, when $T=1$ or $2$, the RNN-based control policy, which is trained using the dataset of the previous environment, is robust enough to be used in a new environment. In contrast, when $T=3$ or $4$, we need to retrain the RNN-based control policy using the dataset of the new environment.
\begin{figure}[t!]
  \centering
  \includegraphics[width=7cm]{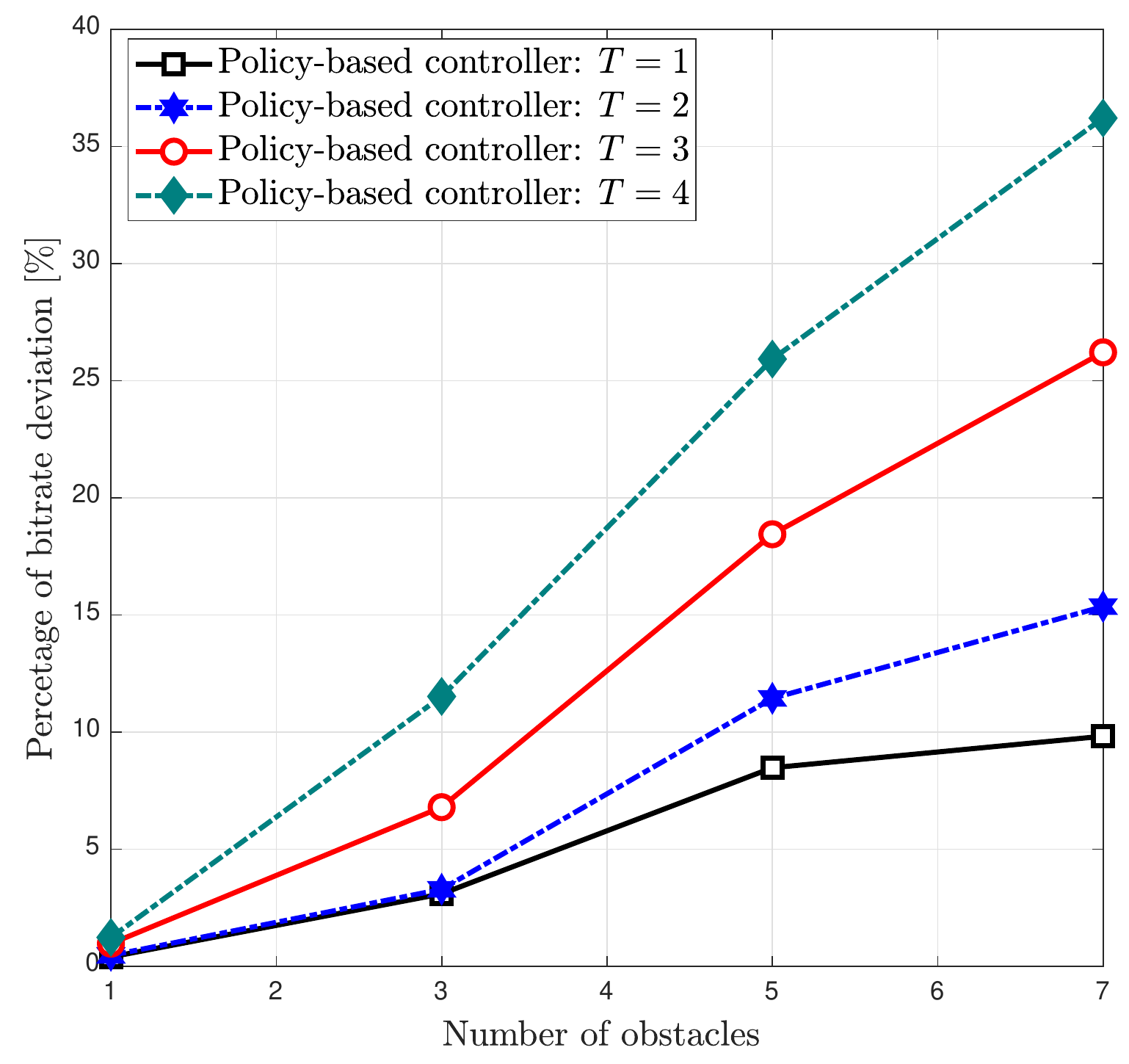}
  \caption{Percentage of rate deviation v.s. number of obstacles.}
   \vspace{-0.5cm}
  \label{robustness}
\end{figure}

\section{Conclusion}\label{Sec:Conclusion}
In this paper, we have proposed a novel framework for guaranteeing ultra-reliable mmW communications using multiple AI-enabled RISs. First, based on risk-sensitive RL, we have defined a parametric risk-sensitive episodic return to maximize the expected bitrate and mitigate the risk of  mmW link blockage. Then, we have analytically derived a closed-form approximation for the gradient of the risk-sensitive episodic return. Next, we have modeled the problem of joint beamforming for mmW AP and phase shift-controlling for mmW RISs as an identical payoff stochastic game in a cooperative multi-agent environment, in which agents are mmW AP and RISs. We have proposed two centralized and distributed controllers using deep RNNs. Then, we have trained our proposed deep RNN-based controllers based on the derived closed-form gradient of the risk-sensitive episodic return. Moreover, we have proved that the gradient updating algorithm converges to the same locally optimal parameters for deep RNN-based centralized and distributed controllers. Simulation results show that the error between policies of the optimal and proposed controllers is less than $1.5\%$. Moreover, the difference between performance of the proposed centralized and distributed controllers is less than $1\%$. On average, for high value of risk-sensitive parameter, the variance of the achievable rates resulting from deep RNN-based controllers is $60\%$ less than that of the risk-averse.
\vspace{-0.3cm}
\appendix
\section{}
\subsection{Proof of Proposition 1}
Let $\Lambda_{T}=\{(\boldsymbol{a}_{t'},r_{t'})|t=t,...,t+T-1\}$ be a trajectory during $T$-consecutive time slots which leads to the episodic reward $R_{T,t}=\sum_{t'=t}^{t+T-1}r_{t'}$. The Taylor expansion of the utility function for small values of $\mu$ yields: $J(\boldsymbol{\theta},t)\simeq \mathbb{E}_{\Lambda_{T}}\{R_{T,t}\}-\frac{\mu}{2}\text{Var}_{\Lambda_{T}}\{R_{T,t}\}$. Since $\text{Var}_{\Lambda_{T}}\{R_{T,t}\}=\mathbb{E}_{\Lambda_{T}}\{R_{T,t}^2\}-\big(\mathbb{E}_{\Lambda_{T}}\{R_{T,t}\}\big)^2$, we can rewrite:
\begin{align}
J(\boldsymbol{\theta},t)\simeq
 \mathbb{E}_{\Lambda_{T}}\{R_{T,t}-\frac{\mu}{2}R_{T,t}^2\}+
\frac{\mu}{2} \big(\mathbb{E}_{\Lambda_{T}}\{R_{T,t}\}\big)^2.
\end{align}

The probability of the trajectory $\Lambda_{T}$ is $\Pi_{\boldsymbol{\theta}}(T)$. Thus, we can write $
J(\boldsymbol{\theta},t)\simeq
 \sum_{\Lambda_{T}}\{\Pi_{\boldsymbol{\theta}}(T)(R_{T,t}-\frac{\mu}{2}R_{T,t}^2)\}+
\frac{\mu}{2} \big(\sum_{\Lambda_{T}}\Pi_{\boldsymbol{\theta}}(T)\{R_{T,t}\}\big)^2$. Hence:
\begin{align}
&\nabla_{\boldsymbol{\theta}}J(\boldsymbol{\theta},t)\simeq
\sum_{\Lambda_{T}}\{\nabla_{\boldsymbol{\theta}}\Pi_{\boldsymbol{\theta}}(T)(R_{T,t}-\frac{\mu}{2}R_{T,t}^2)\}+
\nonumber \\ &
\mu \big(\sum_{\Lambda_{T}}\nabla_{\boldsymbol{\theta}}\Pi_{\boldsymbol{\theta}}(T)\{R_{T,t}\}\big) \big(\sum_{\Lambda_{T}}\Pi_{\boldsymbol{\theta}}(T)\{R_{T,t}\}\big).
\label{Delta_J_Linear}
\end{align}

Since $\nabla_{\boldsymbol{\theta}}\log\Pi_{\boldsymbol{\theta}}(T)=\frac{\nabla_{\boldsymbol{\theta}}\Pi_{\boldsymbol{\theta}}(T)}{\Pi_{\boldsymbol{\theta}}(T)}$, we can write $\nabla_{\boldsymbol{\theta}}J(\boldsymbol{\theta},t)\approx \mathbb{E}_{\Lambda_{T}}\{\nabla_{\boldsymbol{\theta}}\log\Pi_{\boldsymbol{\theta}}(T) \big(R_{T,t}-\frac{\mu}{2}R_{T,t}^2\big)\}+\mu \mathbb{E}_{\Lambda_{T}}\{\nabla_{\boldsymbol{\theta}}\log\Pi_{\boldsymbol{\theta}}(T) R_{T,t}\}\mathbb{E}_{\Lambda_{T}}\{R_{T,t}\}$. By performing additional simplifications, we will yield (\ref{gradient_equation}).

Moreover, when the agent acts independently, the probability of trajectory $\Lambda_{T}$ is equal to $\Pi_{\boldsymbol{\theta}}(T)=\prod_{t'=t}^{t+T-1} \prod_{m \in \mathcal{M}} \pi_{\boldsymbol{\theta}_m}(a_{m,t'}|\mathcal{H}_{m,t'})\Pr\{r_{(t'+1)}|\boldsymbol{a}_{t'},\mathcal{H}_{m,t'}\}$. Due to the fact that $\log(xy)=\log(x)+\log(y)$, and $\nabla_{\boldsymbol{\theta}}\Pr\{r_{(t'+1)}|\boldsymbol{a}_{t'},\mathcal{H}_{m,t'}\}=0$, we can write $\nabla_{\boldsymbol{\theta}}\log\Pi_{\boldsymbol{\theta}}(T)=\sum_{t'=t}^{t+T-1} \sum_{m \in \mathcal{M}}$ $ \nabla_{\boldsymbol{\theta}_m} \log\pi_{\boldsymbol{\theta}_m}(a_{m,t'}|\mathcal{H}_{m,t'})$.
\subsection{Proof of Theorem 1}
Since the agents act independently, for two agents $m$ and $m'$, where $m'\neq m$, we have $\nabla_{\boldsymbol{\theta}_{m}} \log\pi_{\boldsymbol{\theta}_{m'}}(a_{m',t'}|\mathcal{H}_{t'})=0$. Thus, we can write $\nabla_{\boldsymbol{\theta}_m}\log\Pi_{\boldsymbol{\theta}}(T)=\sum_{t'=t}^{t+T-1} \nabla_{\boldsymbol{\theta}_m} \log\pi_{\boldsymbol{\theta}_m}(a_{m,t'}|\mathcal{H}_{t'})$. Then, if the agents, which are synchronized by coordinating links, act independently in a distributed manner, we have:
\begin{align}
&\nabla_{\boldsymbol{\theta}_m} J(\boldsymbol{\theta},t) \approx
 \mathbb{E}_{\Lambda_{T}}\{  \sum_{t'=t}^{t+T-1} \nabla_{\boldsymbol{\theta}_m} \log\pi_{\boldsymbol{\theta}_m}(a_{m,t'}|\mathcal{H}_{t'})  \times \label{one_ApendB}
   \nonumber \\ &
  \big((1+\mu \mathbb{E}_{\Lambda_{T}}\{R_{T,t}\}) R_{T,t}-\frac{\mu}{2}R_{T,t}^2\big)\}.
\end{align}

By comparing (\ref{one_ApendB}) and Proposition 1, we can say that (\ref{one_ApendB}) shows the results of Proposition 1 where $\nabla_{\boldsymbol{\theta}_m}\log\Pi_{\boldsymbol{\theta}}(T)=\sum_{t'=t}^{t+T-1} \nabla_{\boldsymbol{\theta}_m} \log\pi_{\boldsymbol{\theta}_m}(a_{m,t'}|\mathcal{H}_{t'})$. Whether a centralized controller is being executed by a central server, or it is implemented by agents individually executing policies synchronously, joint histories, $\mathcal{H}_t$, are generated from the same distribution $T(\boldsymbol{s}',\boldsymbol{s},\boldsymbol{a})$ and identical payoff will be achieved by mmW APs and all RISs in POIPSG. This fact shows that the distributed algorithm is sampling from the same distribution as the centralized algorithm samples. Thus, starting from the same point in the search space of policies, on the same history sequence, the gradient updating algorithm will be stepwise the same for the distributed controllers and the centralized one.
\subsection{Proof of Theorem 2}
Assume that for a given global history sequence $\mathcal{H}_{t'}$ for $t'=t, t+1, ..., t+T-1$ and at the convergence of the gradient update algorithm using (\ref{gradient_equation}), the policy profile under the distributed controllers is $\{\pi_{\boldsymbol{\theta}^*_m}|,\forall m\in \mathcal{M}\}$. At this policy profile, since all agents have an identical payoff function, the best response of agent $m$ to the given strategies of all other agents is defined as $\pi_{\boldsymbol{\theta}^{\text{b}}_m}$ where $\theta^{\text{b}}_m=\operatorname*{argmax}_{\theta_m} J(\boldsymbol{\theta}_m, \cup_{m' \in \mathcal{M} \backslash \{m\}} \boldsymbol{\theta}^*_{m'} ,t)$. In this case,  due to the fact that the agents act independently, the gradient updating rule for agent $m$ to find its best response is given by (25). Since the global history sequence $\mathcal{H}_{t'}$ for $t'=t, t+1, ..., t+T-1$ is identical for all agents, the gradient updating rule in (25) converges to $\theta_m^*$. Subsequently, based on the gradient updating rule, the best response of the agent $m$ will be $\pi_{\boldsymbol{\theta}^{\text{b}}_m}=\pi_{\boldsymbol{\theta}^{*}_m}$, if other agents choose the converged policy profiles $\boldsymbol{\theta}^*_{m'},\forall m'\neq m$. Thus, in this case, $\theta^*_m=\operatorname*{argmax}_{\theta_m} J(\boldsymbol{\theta}_m, \cup_{m' \in \mathcal{M} \backslash \{m\}} \boldsymbol{\theta}^*_{m'} ,t)$. Consequently, at the strategy profile $\{\pi_{\boldsymbol{\theta}^*_m}|,\forall m\in \mathcal{M}\}$, agent $m$ can not do better by choosing policy different from $\pi_{\boldsymbol{\theta}^*_m}$ , given that every other agent $m'\neq m$ adheres to $\pi_{\boldsymbol{\theta}^*_{m'}}$. Thus, the gradient update algorithm using (\ref{gradient_equation}) converges to the policy profile which is an NE of POIPSG under the distributed controllers.
\bibliographystyle{IEEEtran}
\def\baselinestretch{0.985}

\bibliography{Arxive}

\vspace{-0.8cm}

\begin{IEEEbiography}[{\includegraphics[width=1in,height=1.25in,clip,keepaspectratio]{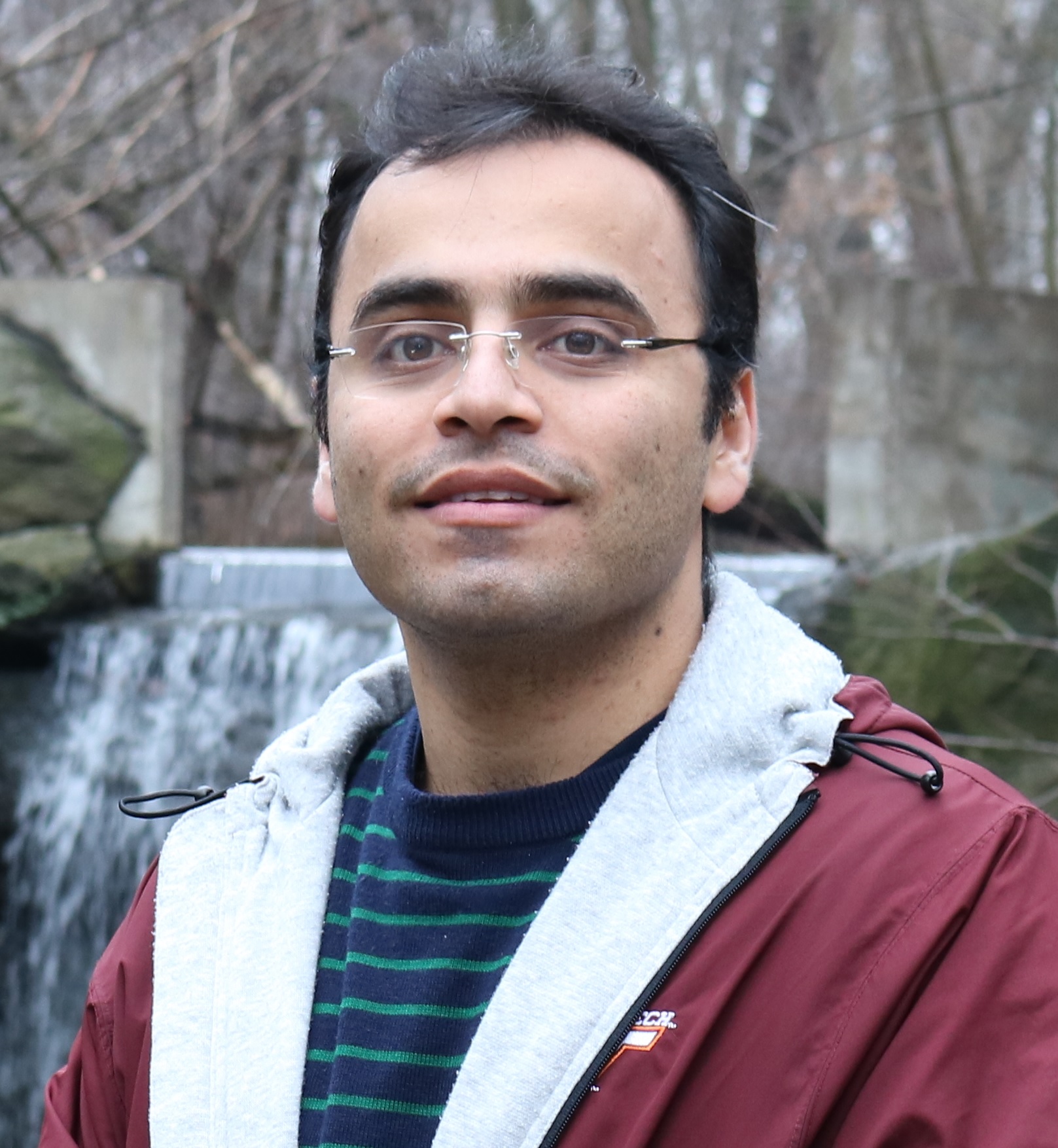}}] Mehdi~Naderi~Soorki received the BSc degree in electrical engineering from the Iran University of Science and Technology, in 2007, and the MSc and PhD degrees in telecommunication networks from the Isfahan University of Technology, in 2010 and 2018, respectively. He has been a research scholar at the Network Science, Wireless, and Security laboratory, Dept. of Electrical and Computer Engineering, Virginia Tech from 2015 to 2017. He is currently an assistant professor at electrical engineering department, engineering faculty, Shahid Chamran University of Ahvaz, Iran. His research interests include future wireless network design based on advance mathematical tools such as
game, graph, and queueing theories, optimization techniques, and machine learning.
\end{IEEEbiography}
\vfill

\vspace{-0.21cm}

\begin{IEEEbiography}[{\includegraphics[width=1in,height=1.25in,clip,keepaspectratio]{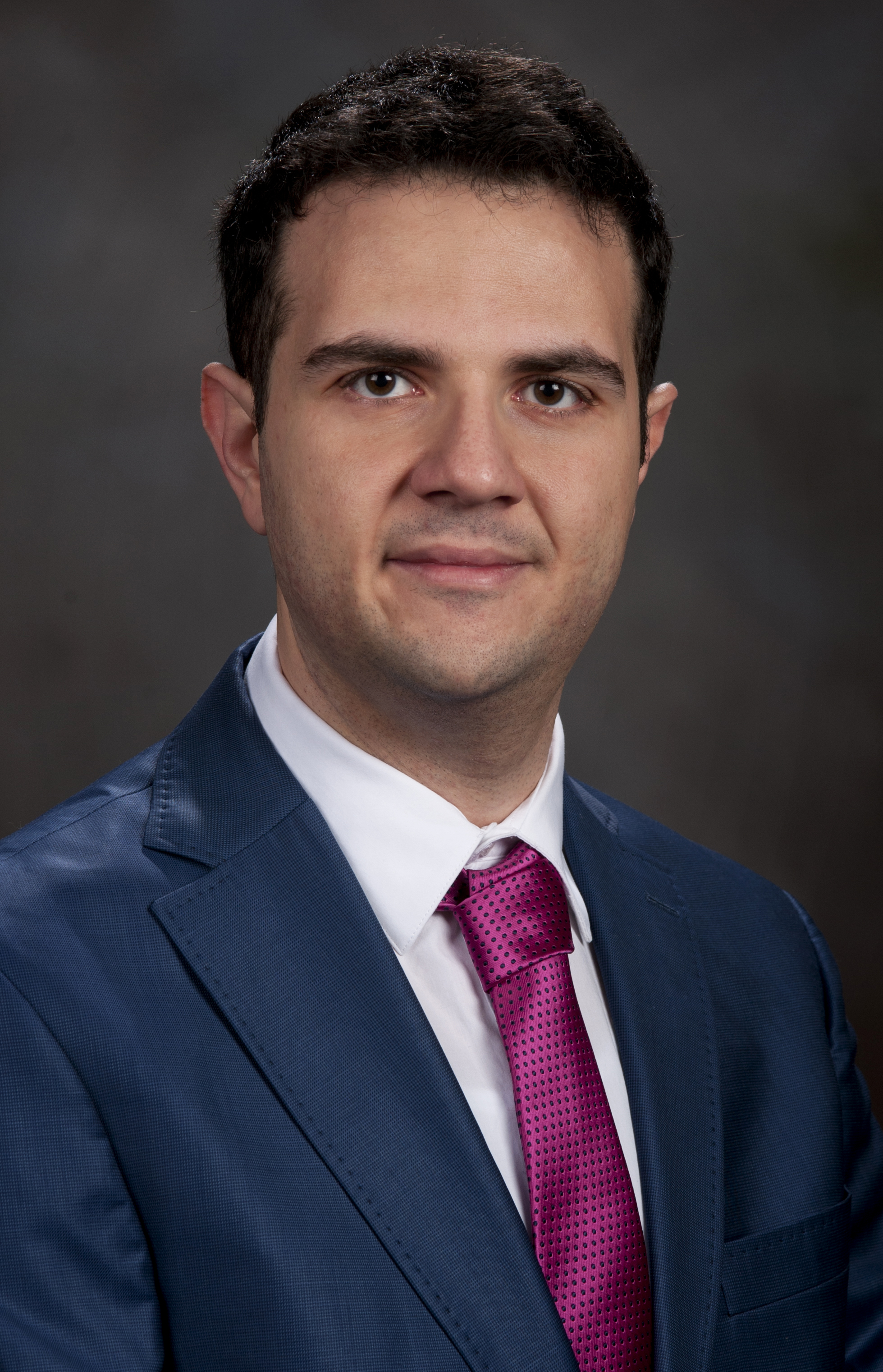}}]
Walid~Saad (S’07, M’10, SM’15, F’19) received his Ph.D degree from the University of Oslo in 2010. He is currently a Professor at the Department of Electrical and Computer Engineering at Virginia Tech, where he leads the Network sciEnce, Wireless, and Security (NEWS) laboratory. His research interests include wireless networks, machine learning, game theory, security, unmanned aerial vehicles, cyber-physical systems, and network science. Dr. Saad is a Fellow of the IEEE and an IEEE Distinguished Lecturer. He is also the recipient of the NSF CAREER award in 2013, the AFOSR summer faculty fellowship in 2014, and the Young Investigator Award from the Office of Naval Research (ONR) in 2015. He was the author/co-author of ten conference best paper awards at WiOpt in 2009, ICIMP in 2010, IEEE WCNC in 2012, IEEE PIMRC in 2015, IEEE SmartGridComm in 2015, EuCNC in 2017, IEEE GLOBECOM in 2018, IFIP NTMS in 2019, IEEE ICC in 2020, and IEEE GLOBECOM in 2020. He is the recipient of the 2015 Fred W. Ellersick Prize from the IEEE Communications Society, of the 2017 IEEE ComSoc Best Young Professional in Academia award, of the 2018 IEEE ComSoc Radio Communications Committee Early Achievement Award, and of the 2019 IEEE ComSoc Communication Theory Technical Committee. He was also a co-author of the 2019 IEEE Communications Society Young Author Best Paper and of the 2021 IEEE Communications Society Young Author Best Paper. From 2015-2017, Dr. Saad was named the Stephen O. Lane Junior Faculty Fellow at Virginia Tech and, in 2017, he was named College of Engineering Faculty Fellow. He received the Dean's award for Research Excellence from Virginia Tech in 2019. He currently serves as an editor for the IEEE Transactions on Mobile Computing, the IEEE Transactions on Network Science and Engineering, and the IEEE Transactions on Cognitive Communications and Networking. He is an Editor-at-Large for the IEEE Transactions on Communications.
\end{IEEEbiography}
\vfill

\begin{IEEEbiography}[{\includegraphics[width=1in,height=1.25in,clip,keepaspectratio]{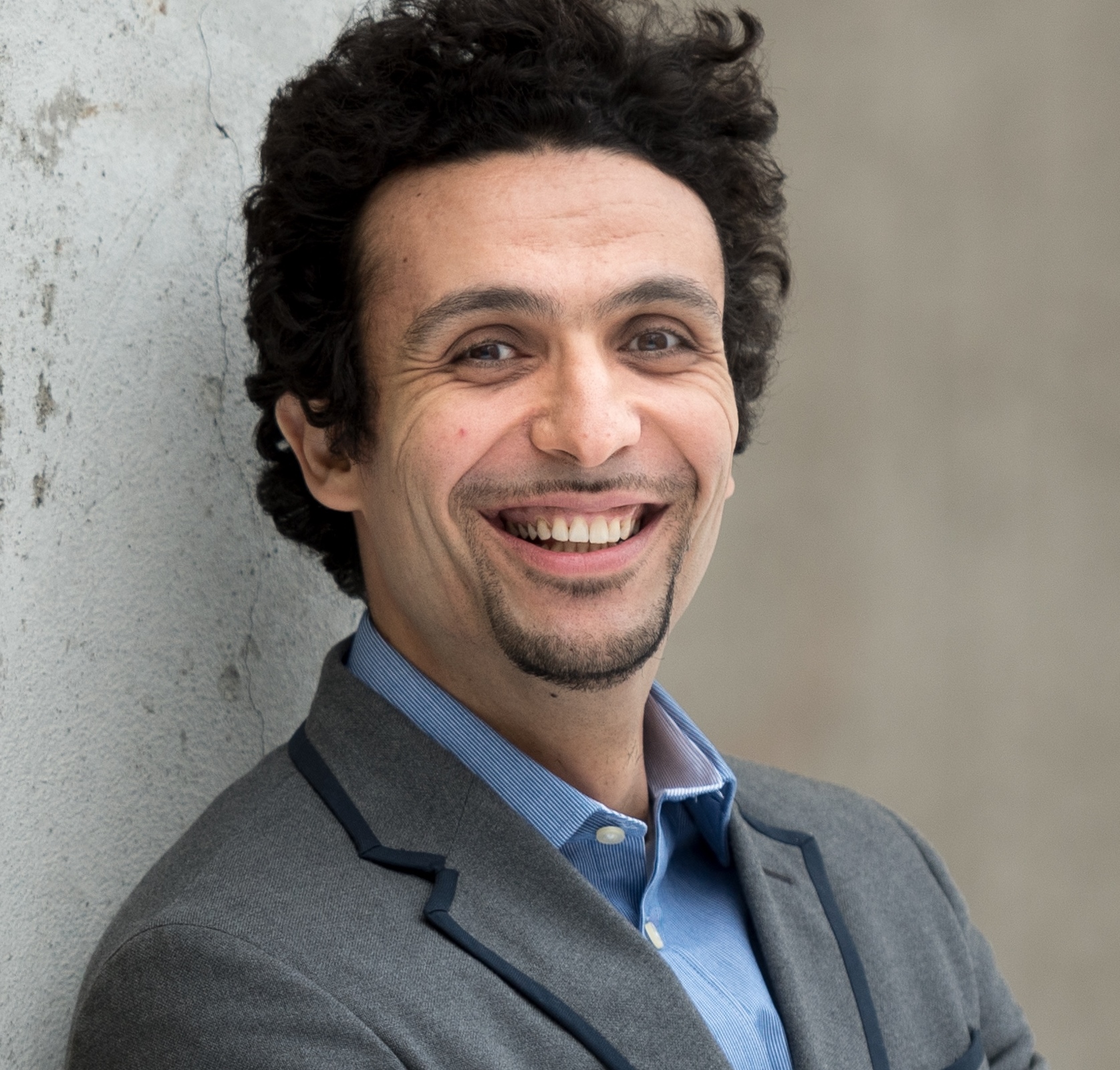}}]
Mehdi~Bennis is an Associate Professor at the Centre for Wireless Communications, University of Oulu, Finland, an Academy of Finland Research Fellow and head of the intelligent connectivity and networks/systems group (ICON). His main research interests are in radio resource management, heterogeneous networks, game theory and machine learning in 5G networks and beyond. He has co-authored one book and published more than 200 research papers in international conferences, journals and book chapters. He has been the recipient of several prestigious awards including the 2015 Fred W. Ellersick Prize from the IEEE Communications Society, the 2016 Best Tutorial Prize from the IEEE Communications Society, the 2017 EURASIP Best paper Award for the Journal of Wireless Communications and Networks, the all-University of Oulu award for research and the 2019 IEEE ComSoc Radio Communications Committee Early Achievement Award. Dr Bennis is an editor of IEEE TCOM.
\end{IEEEbiography}

\begin{IEEEbiography}[{\includegraphics[width=1in,height=1.25in,clip,keepaspectratio]{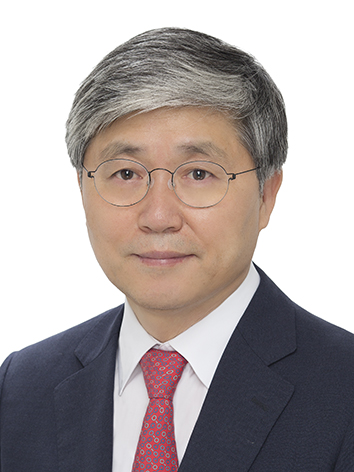}}]
Choong~Seon~Hong (S’95-M’97-SM’11) received the B.S. and M.S. degrees in electronic engineering from Kyung Hee University, Seoul, South Korea, in 1983 and 1985, respectively, and the Ph.D. degree from Keio University, Tokyo, Japan, in 1997. In 1988, he joined KT, Gyeonggi-do, South Korea, where he was involved in broadband networks as a member of the Technical Staff. Since 1993, he has been with Keio University. He was with the Telecommunications Network Laboratory, KT, as a Senior Member of Technical Staff and as the Director of the Networking Research Team until 1999. Since 1999, he has been a Professor with the Department of Computer Science and Engineering, Kyung Hee University. His research interests include future Internet, intelligent edge computing, network management, and network security. Dr. Hong is a member of the Association for Computing Machinery (ACM), the Institute of Electronics, Information and Communication Engineers (IEICE), the Information Processing Society of Japan (IPSJ), the Korean Institute of Information Scientists and Engineers (KIISE), the Korean Institute of Communications and Information Sciences (KICS), the Korean Information Processing Society (KIPS), and the Open Standards and ICT Association (OSIA). He has served as the General Chair, the TPC Chair/Member, or an Organizing Committee Member of international conferences, such as the Network Operations and Management Symposium (NOMS), International Symposium on Integrated Network Management (IM), Asia-Pacific Network Operations and Management Symposium (APNOMS), End-to-End Monitoring Techniques and Services (E2EMON), IEEE Consumer Communications and Networking Conference (CCNC), Assurance in Distributed Systems and Networks (ADSN), International Conference on Parallel Processing (ICPP), Data Integration and Mining (DIM), World Conference on Information Security Applications (WISA), Broadband Convergence Network (BcN), Telecommunication Information Networking Architecture (TINA), International Symposium on Applications and the Internet (SAINT), and International Conference on Information Networking (ICOIN). He was an Associate Editor of the IEEE TRANSACTIONS ON NETWORK AND SERVICE MANAGEMENT and the IEEE JOURNAL OF COMMUNICATIONS AND NETWORKS. He currently serves as an Associate Editor for the International Journal of Network Management.
\end{IEEEbiography}

\vfill
\balance
\end{document}